\documentclass[english,12pt]{article}
\sloppy

\usepackage[dvips,letterpaper]{geometry}
\usepackage{graphicx,psfrag,epsf}
\usepackage{times}
\usepackage{fullpage}
\usepackage{amsmath,amssymb,amsthm,amsfonts,mathrsfs,dsfont,tikz}
\usepackage{epsfig}
\usepackage{subfig}
\usepackage{url}
\usepackage{verbatim}
\usepackage{natbib}
\usepackage{geometry,setspace,scalefnt,bm,float,enumerate,multirow,booktabs,pdflscape,rotating,arydshln,color,mathtools,subfig,subfloat}
\usepackage{indentfirst}
\usepackage{placeins}
\usepackage{caption}
\usepackage{hyperref,proof}



\DeclareFontFamily{U}{mathx}{\hyphenchar\font45}
\DeclareFontShape{U}{mathx}{m}{n}{
<5> <6> <7> <8> <9> <10>
<10.95> <12> <14.4> <17.28> <20.74> <24.88>
mathx10
}{}
\DeclareSymbolFont{mathx}{U}{mathx}{m}{n}
\DeclareFontSubstitution{U}{mathx}{m}{n}
\DeclareMathAccent{\widecheck}{0}{mathx}{"71}
\DeclareMathAccent{\wideparen}{0}{mathx}{"75}

\input cyracc.def
















%

%
%


\setlength{\topmargin}{0cm}   
\setlength{\oddsidemargin}{0cm} 
\setlength{\evensidemargin}{0cm} 
\setlength{\textwidth}{170mm}  
\setlength{\textheight}{225mm}  
\setlength{\footskip}{0.75cm}    
\setlength{\parindent}{0.5cm}  

\theoremstyle{definition}
\newtheorem{thm1}{Theorem}
\newtheorem{rem1}{Remark}

\newtheorem{theorem}{Theorem}[section]
\newtheorem{lemma}[theorem]{Lemma}
\newtheorem{corollary}[theorem]{Corollary}
\newtheorem{proposition}[theorem]{Proposition}


\title{A Model for Bimodal Rates and Proportions}
\author{Roberto Vila$^{1}$, Lucas Alfaia$^{1}$, Andr\'e F.B. Menezes$^{2}$\\ Mehmet N. \c{C}ankaya$^{3,4}$
\ and \  Marcelo Bourguignon$^5$\thanks{Corresponding
author: Marcelo Bourguignon. Departamento de Estat\'istica, Universidade Federal do Rio Grande do Norte, Natal, RN, Brazil. Email: m.p.bourguignon@gmail.com.}
\\
{\footnotesize $^{1}$Department of Statistics, Universidade de Bras\'ilia, Bras\'ilia, DF, Brazil}
\\[-0.15cm]
{\footnotesize $^{2}$Department of Statistics, Universidade Estadual de Campinas, Campinas, SP, Brazil}
\\[-0.15cm]
{\footnotesize $^{3}$Department of International Trading and Finance, Faculty of Applied Sciences, U\c{s}ak University, U\c{s}ak, Turkey}
\\[-0.15cm]
{\footnotesize $^{4}$Department of Statistics, Faculty of Art and Sciences, U\c{s}ak University, U\c{s}ak, Turkey}
\\[-0.15cm]
{\footnotesize $^{5}$Department of Statistics, Universidade Federal do Rio Grande do Norte, Natal, RN, Brazil}\\[-0.15cm]
}
\date{}

\begin{document}

\maketitle
\begin{abstract}
\noindent

The beta model is the most important distribution for fitting data with the unit interval. However,
the beta distribution is not suitable to model bimodal unit interval data.
In this paper, we propose a bimodal beta  distribution
constructed by using an approach based on the alpha-skew-normal model. We discuss several properties of this distribution such as bimodality, real moments, entropy measures and identifiability. Furthermore, we propose a new regression model based on the proposed model
and discuss residuals. Estimation is performed by maximum
likelihood. A Monte Carlo experiment is conducted to evaluate the performances of these estimators in finite samples with a discussion of the results.
An application is provided to show the modelling competence of the proposed distribution when the data sets show bimodality.\\

\noindent {\bf Keywords.} Bimodal model; Bimodality; Bounded data; Beta distribution; Maximum likelihood; Regression model.
\end{abstract}

\section{Introduction}
\noindent

The need for modeling and analyzing bimodal bounded data, in specials for data on the unit interval, occurs in many fields of real life
such as bioinformatics \citep{ji05}, image classification \citep{ma09}, transaction at a car dealership
\citep{Smithson1} and so on. In such situations, in order to apply probabilistic modeling these phenomena, under a
parametric paradigm, probability distributions limited to $[0, 1]$ are indispensable.
Especially, the unimodal beta model is the most widely model used in the literature
to describe data in the unit interval, especially because of its
flexibility and fruitful properties \citep{jo95}.
However, despite its broad sense applicability in many fields, the beta distribution
is not suitable to model bimodal data on the unit interval.

In general, one uses mixtures of distributions for describing the bimodal data.
For example, \cite{Smithson1} and \cite{Smithson2}
consider finite mixtures of beta regression models to analyze the priming
effects in judgments of imprecise probabilities. However, in general, mixtures
of distributions may suffer from identifiability problems in the parameter estimation; see
\cite{li1,li2}. Thus, new mixture-free models which have the capacity to accommodate
unimodal and bimodal are very important as often real-world data are better modeled by these models. The nature of phenomena can show bimodality due to many reasons such as economical policies, uncertainty of social movement and its effects on the economy \citep{wong,VC20}.

Variations of the beta model can be found in \cite{ferrari2004}, \cite{os08}, \cite{bayes12}, \cite{Hahn21}, among others. However, all the models cited above are not suitable for capturing bimodality.
Recently, probabilistic models for modeling bimodality on the positive real line were discussed by various authors. \cite{olmos17} introduced
recently a bimodal extension of the Birnbaum-Saunders distribution.
\cite{VFSPO20} proposed the bimodal gamma distribution.
\cite{VC20} considered a bimodal Weibull distribution. Despite this, to the best of our
knowledge, a specific parametric model to describe bimodality data observed on the unit interval has never been considered in the literature.

Based on the above discussion and motivated by the presence of bimodality in proportion responses, we develop a model for
double bounded response variables. In particular, we extended the usual
beta distribution using a quadratic transformation technique used to generate bimodal functions \citep{e:10}.
The approach therefore appears to be a new development for the literature.
We discuss several properties of the proposed model such as bimodality, real moments, hazard rate, entropy measures and identifiability.
Furthermore, we study the effects of the explanatory variables on the response variable using a regression model.

In what follows, we list some of the main contribution and advantages of the proposed model.
\begin{itemize}
\item We introduce a new family of distributions that is flexible version of the usual
beta distribution so that it is capable of fitting bimodal as well as unimodal data.
We provide general properties of the proposed model;
\item We propose an extend version of the quadratic transformation technique used to generate bimodal functions;
\item The proposed model allows the boundary values to lie on a smooth unified continuum
along with the rest of the open interval (0, 1), as opposed existing as one or two discontinuities, i.e.,
it does not require boundary values to be either discarded or else treated separately \citep{Hahn21}.
Thus, one of the main motivation of this paper is to contribute with another attractive
regression model for modeling of double bounded response variables.
\end{itemize}

The rest of the article proceeds as follows. In Sections \ref{sect:2} and \ref{sect:3}, we present the new distribution and
derive some of its properties.
Then in Section \ref{sect:4}, we present the main properties of the bimodal Beta, which include entropy measures, stochastic representation and identifiability.
Section \ref{sect:5} presents the bimodal Beta regression model.
Also, the estimation method for the model parameters and diagnostic measures are discussed.
In Section \ref{sect:6}, some numerical results of the estimators and the empirical distribution of the residuals are presented with a discussion of the
results.
A real life application related to the proportion of votes that Jair Bolsonaro received in the second turn of Brazilian elections in 2018  is analyzed in Section \ref{sect:7}.
Section \ref{sect:8} summarizes the main findings of the paper.

\section{The Beta bimodal distribution}
\label{sect:2}
\noindent

In this Section, the bimodal Beta (BBeta) distribution is introduced and its density is derived. Moreover,
some results on the bimodality properties are obtained.
We say that a random variable (r.v.) $X$ has a BBeta distribution with parameter vector
$\boldsymbol{\theta}_\delta=(\alpha,\beta, \rho, \delta)$, $\alpha>0,\beta>0$, $\rho\geqslant 0$ and $\delta\in\mathbb{R}$,
denoted by $X\sim \text{BBeta}(\boldsymbol{\theta}_\delta)$,
if its probability density function (PDF) is given by
\begin{align}\label{beta-density}
    f(x;\boldsymbol{\theta}_\delta)
    =
    \begin{cases}
    \displaystyle
    \frac{\rho+(1-\delta{x})^2}{ Z(\boldsymbol{\theta}_\delta) {B}(\alpha,\beta) } \,
    x^{\alpha-1} \, (1-x)^{\beta-1},
    & 0\leqslant x\leqslant 1
    \\[0,2cm]
    0, & \text{otherwise},
    \end{cases}
\end{align}
where
\begin{align}\label{partition-function}
Z(\boldsymbol{\theta}_\delta)
=
1+\rho
- 2\delta\,{\alpha\over \alpha+\beta}
+
\delta^2\,{\alpha(\alpha+1)\over(\alpha+\beta)(\alpha+\beta+1)}
\end{align}
denotes the normalization constant and ${B}(\alpha,\beta)$ is the beta function.
When $\delta=0$, $\rho$ is simplified in \eqref{beta-density}, and then we obtain the classic beta distribution
with parameter vector $\boldsymbol{\theta}_0=(\alpha,\beta,\rho, 0)\coloneqq (\alpha,\beta)$. The parameters $\alpha$, $\beta$ (which appear as exponents of the r.v.) and $\rho$, control the format of the distribution. The uni- or bimodality is controlled  by the parameter $\delta$. Note that for $\alpha$, $\beta$ and $\delta\neq 0$ fixed, the parameter $\rho$ also controls the uni- or bimodality of the distribution. From Figure \ref{pdfbbeta} we note some different shapes of the BBeta PDF for different combinations of parameters. Figure \ref{pdfbbeta} (a) and (b) represent $L$ shape and its bimodal form and
bell shaped case of Beta distribution, respectively.
\begin{figure}[H]
	\centering
	\subfloat[]{\label{fig:bimodalpdfbbeta}\includegraphics[width=0.45\textwidth]{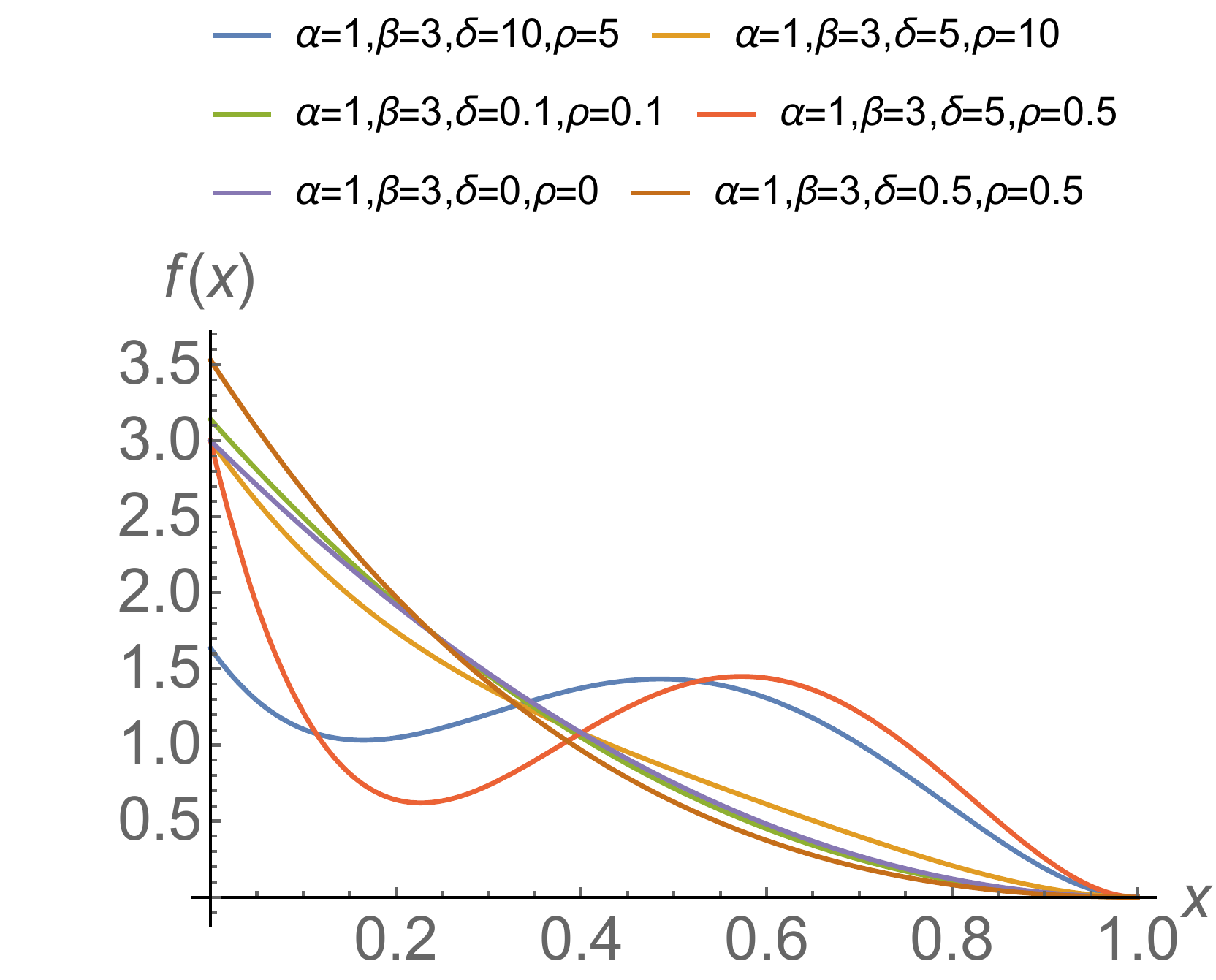}}
	\quad
	\subfloat[]{\label{fig:flucpdfbbeta}\includegraphics[width=0.45\textwidth]{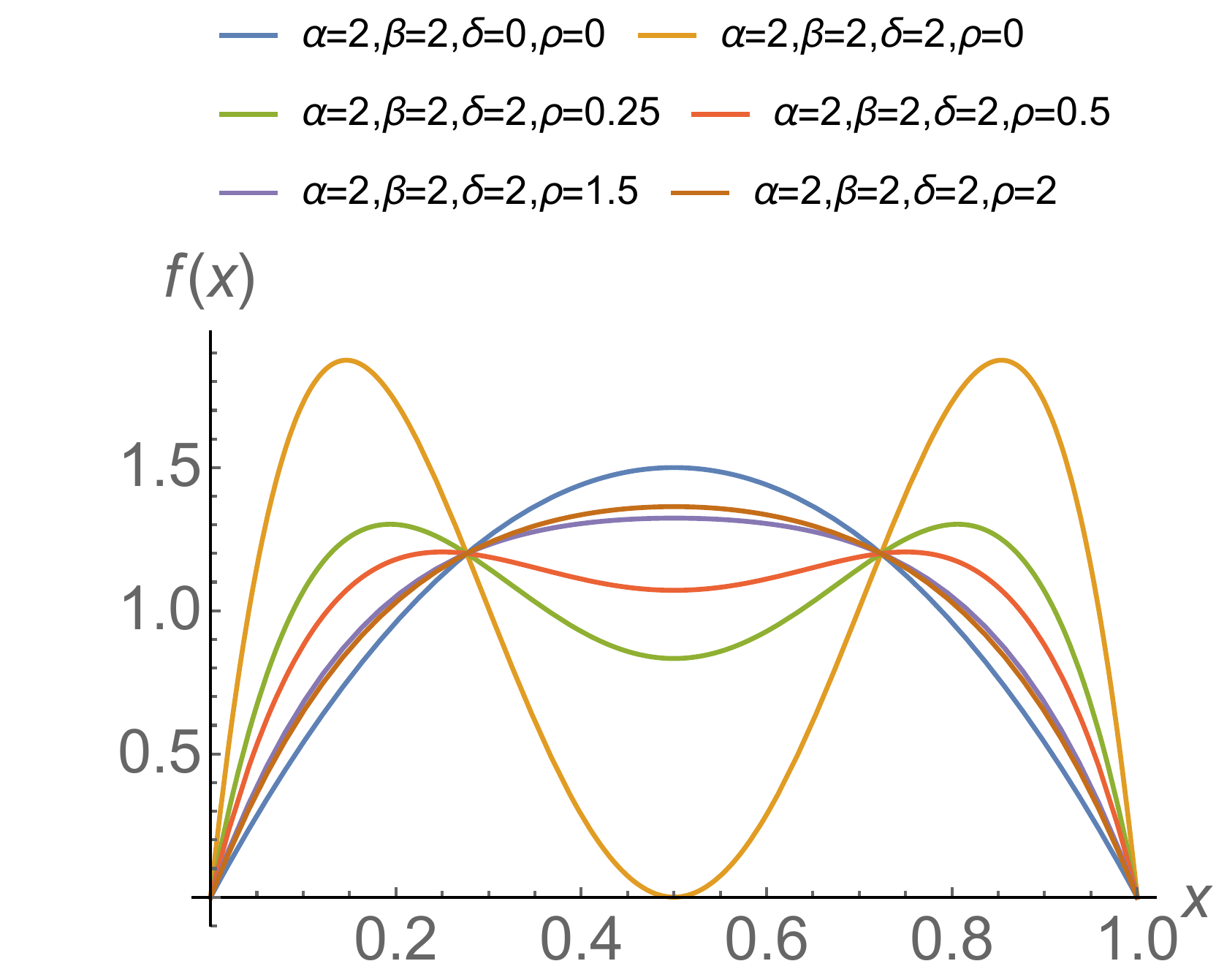}}
	\caption{The PDF of BBeta for different values of parameters.}
		\label{pdfbbeta}
\end{figure}

If $X\sim \text{BBeta}(\boldsymbol{\theta}_\delta)$, the cumulative distribution function (CDF), the survival function (SF) and the hazard rate function (HR) of $X$ are, respectively, given by
\begin{align}
&F(x;\boldsymbol{\theta}_\delta)
=
\dfrac{1}{ Z(\boldsymbol{\theta}_\delta) }
\biggl[
(1+\rho)\,
{I_{x}(\alpha, \beta)}
-2\delta\,
\dfrac{B_{x}(\alpha+1, \beta)}{B(\alpha, \beta)}
+\delta^2\,
\dfrac{B_{x}(\alpha+2, \beta)}{B(\alpha, \beta)}
\biggr],  \label{CDF}
\\[0,2cm]
&S(x;\boldsymbol{\theta}_\delta)
=
\dfrac{1}{ Z(\boldsymbol{\theta}_\delta)  }
\sum_{i=0}^{2}
c_i\,
\biggl[
\dfrac{B(\alpha+i, \beta)}{B(\alpha, \beta)}
-
\dfrac{B_{x}(\alpha+i, \beta)}{B(\alpha, \beta)}
\biggr]
\ \text{and}    \label{SF}
\\[0,2cm]
&H(x;\boldsymbol{\theta}_\delta)
=
\dfrac{\big[\rho+(1-\delta{x})^2\big]  x^{\alpha-1} \, (1-x)^{\beta-1}}{ 		
	\sum_{i=0}^{2} c_i\,
	\big[B(\alpha+i, \beta)
	-
	B_{x}(\alpha+i, \beta)
	\big]},   \label{HR}
\end{align}
where $I_{x}(\alpha, \beta)$ is the incomplete beta function ratio, $B_{x}(\alpha, \beta)$ is the incomplete beta function, and $c_0=1+\rho$, $c_1=-2\delta$, $c_2=\delta^2$. For more details on the derivation of these formulas see Section \ref{sect:3}.

%

\subsection{Bimodality properties}
\noindent

To state the following result that guarantees the bimodality of the \text{BBeta} distribution, we define the set $\mathcal{A}$ formed by all $\boldsymbol{\theta}_\delta=(\alpha,\beta, \rho, \delta)\in(0,+\infty)^2\times[0,+\infty)\times\mathbb{R}$  such that following hold:
\begin{eqnarray}
\alpha>1,\beta>1, \delta> 1,\rho\neq 0; \label{cond-1}
\\[0,1cm]
\delta(\alpha-3)>-2(\alpha+\beta-2); \label{cond-2}
\\[0,1cm]
2\delta(2+\delta-\alpha)<(\rho+1)(\alpha+\beta-2);  \label{cond-3}
\\[0,1cm]
(\rho+1)(\alpha-1)>2\delta.  \label{cond-4}
\end{eqnarray}

Note that the set $\mathcal{A}$ is non-empty because the point $\boldsymbol{\theta}_\delta=(6,6,0.1,2)\in \mathcal{A}$.
\begin{thm1}[Bimodality; case $\rho\neq 0$]
If $X\sim \text{BBeta}(\boldsymbol{\theta}_\delta)$ such that $\boldsymbol{\theta}_\delta\in \mathcal{A}$ then the \text{BBeta} distribution is bimodal.
\end{thm1}
\begin{proof}
A simple computation shows that
\begin{eqnarray}\label{derivative}
    f'(x;\boldsymbol{\theta}_\delta)
    =
    {x^{\alpha-2}(1-x)^{\beta-2}\over Z(\boldsymbol{\theta}_\delta) {B}(\alpha,\beta)} \,
    p_3(x),
\end{eqnarray}
where
\begin{align}\label{polynomial}
p_3(x)= \big[\rho+(1-\delta x)^2\big]\big[(\alpha-1)(1-x)-(\beta-1)x\big]
-2\delta(1-\delta x)x(1-x).
\end{align}
This implies that, $f'(x;\boldsymbol{\theta}_\delta)=0$ if and only if $x=0$, $x=1$ and
\begin{align*}
p_3(x)&=
	-x^3\delta^2(\alpha+\beta-2)+x^2\delta\big[\delta(\alpha-3)+2(\alpha+\beta-2)\big]
	\\[0,2cm]
	&+x\big[2\delta(2+\delta-\alpha)-(\rho+1)(\alpha+\beta-2)\big]
	-2\delta+(\rho+1)(\alpha-1)=0.
\end{align*}
Since, by definition, the boundary points are never critical points we exclude the analysis at these points.
By using \eqref{cond-1} in \eqref{derivative}-\eqref{polynomial} we have $f'(x;\boldsymbol{\theta}_\delta)\neq 0$ for all $x>1$.
In other words, the roots of $p_3(x)$ occur within the interval $(0,1)$.

We claim that, under conditions \eqref{cond-1}, \eqref{cond-2}, \eqref{cond-3} and \eqref{cond-4}, $p_3(x)$ has exactly three different roots within the interval $(0,1)$.

Indeed, under \eqref{cond-1}-\eqref{cond-4}, by Descartes’ rule of signs
(see, e.g. \cite{xue2012loop} and \cite{{griffiths1947introduction}}), $p_3(x)$ has three or one positive roots.
But by conditions \eqref{cond-1}-\eqref{cond-4} and by Vieta’s formula (see, e.g., \cite{vinberg2003course}),
\begin{eqnarray*}
x_1+x_2+x_3={\delta\big[\delta(\alpha-3)+2(\alpha+\beta-2)\big]\over \delta^2(\alpha+\beta-2)},
\\[0,1cm]
x_1x_2+x_2x_3+x_1x_3=-{\big[2\delta(2+\delta-\alpha)-(\rho+1)(\alpha+\beta-2)\big]\over \delta^2(\alpha+\beta-2)},
\\[0,1cm]
x_1x_2x_3={-2\delta+(\rho+1)(\alpha-1)\over \delta^2(\alpha+\beta-2)},
\end{eqnarray*}
we obtain that the polynomial equation $p_3(x)= 0$ has exactly three positive roots $x_1,x_2$ and $x_3$ in $(0,1)$, and the claimed follows.

Without loss of generality, let’s assume that $x_1<x_2<x_3$. Since, for $\alpha>1,\beta>1$, $f(x; \boldsymbol{\theta}_\delta) \longrightarrow 0$ as $x \to 0^+$
and $f(x; \boldsymbol{\theta}_\delta) \longrightarrow 0$  as
$x\to 1^-$, it follows that the BBeta density \eqref{beta-density} increases on the intervals $(0, x_1)$
and $(x_2, x_3)$, and decreases on $(x_1, x_2)$ and $(x_3, 1)$. That is, $x_1$ and $x_3$ are two
maximum points and $x_2$ is the unique minimum point. Thus we have complete the proof of theorem.
\end{proof}

\begin{thm1}[Bimodality; case $\rho= 0$]
	If $X\sim \text{BBeta}(\boldsymbol{\theta}_\delta)$, $\rho= 0$, $\alpha>1,\beta>1$, $\delta>1$ and
\begin{eqnarray}\label{condition-zero}
	\big[\delta(\alpha+1)+\alpha+\beta-2\big]^2>4\delta(\alpha+\beta)(\alpha-1),
\end{eqnarray}	
	then the \text{BBeta} distribution is bimodal.
\end{thm1}
\begin{proof}
When $\rho= 0$, in \eqref{derivative}, we have
{\scalefont{0.97}
\begin{align}\label{flinha}
f'(x;\boldsymbol{\theta}_\delta)
&=
{x^{\alpha-2}(1-x)^{\beta-2} (1-\delta x)\over Z(\boldsymbol{\theta}_\delta) {B}(\alpha,\beta)} \,
\Big\{
(1-\delta x)\big[(\alpha-1)(1-x)-(\beta-1)x\big]
-2\delta x(1-x)\Big\}
\\[0,1cm]
&=
{x^{\alpha-2}(1-x)^{\beta-2} (1-\delta x)\over Z(\boldsymbol{\theta}_\delta) {B}(\alpha,\beta)} \,
\Big\{
x^2(\alpha+\beta)\delta-\big[\delta(\alpha+1)+\alpha+\beta-2\big]x+(\alpha-1)
\Big\}. \nonumber
\end{align}
}
A direct calculus shows that  $f'(x;\boldsymbol{\theta}_\delta)=0$ if and only if (excluding the boundary points) $x=1/\delta$ and
\begin{align*}
x_{\pm}=
{
	\delta(\alpha+1)+\alpha+\beta-2 \pm
	\sqrt{\big[\delta(\alpha+1)+\alpha+\beta-2\big]^2-4\delta(\alpha+\beta)(\alpha-1)}
\over
2\delta(\alpha+\beta)
}.
\end{align*}

Note that, by conditions $\alpha>1,\beta>1$, $\delta>1$, in \eqref{flinha} we have $f'(x;\boldsymbol{\theta}_\delta)\neq 0$ for all $x>1$.
%
Hence, under condition \eqref{condition-zero}, it follows that the equation $f'(x;\boldsymbol{\theta}_\delta)=0$ has three positive roots $x=1/\delta,x_-$ and $x_+$ within the interval $(0,1)$, where $x_-<x=1/\delta<x_+$.

Since, for $\alpha>1,\beta>1$, $f(x; \boldsymbol{\theta}_\delta) \longrightarrow 0$ as $x \to 0^+$
and $f(x; \boldsymbol{\theta}_\delta) \longrightarrow 0$  as
$x\to 1^-$, the bimodality of the BBeta distribution is guaranteed, where $x_-$ and $x_+$ are two
maximum points and $x=1/\delta$ is the unique minimum point.
\end{proof}

\vspace*{-0,5cm}
\section{Some characteristics and properties}
\label{sect:3}
\noindent

In this section, some closed expressions for the mean residual life function and real moments of the BBeta distribution are obtained.

\begin{thm1}\label{truncated-moments}
	If $X\sim \text{BBeta}(\boldsymbol{\theta}_\delta)$ then, for $0\leqslant a<b\leqslant 1$ and $r> -\alpha$,
	\begin{eqnarray*}
	\mathbb{E}\big(X^r\mathds{1}_{\{a\leqslant X\leqslant b\}}\big)
	=
	\dfrac{1}{ Z(\boldsymbol{\theta}_\delta)  }
	\sum_{i=0}^{2}
	c_i\,
	\biggl[
 \dfrac{B_{b}(\alpha+r+i, \beta)}{B(\alpha, \beta)}
	-
	\dfrac{B_{a}(\alpha+r+i, \beta)}{B(\alpha, \beta)}
	\biggr],
	\end{eqnarray*}
	where $c_0=1+\rho$, $c_1=-2\delta$, $c_2=\delta^2$, and
	$B_x(\alpha, \beta)$ is the incomplete beta function.
\end{thm1}
\begin{proof}
By using definition of expectation and definition of BBeta density, we have
	\begin{eqnarray*}
			\mathbb{E}\big(X^r \mathds{1}_{\{a\leqslant X\leqslant b\}}\big)
				=
		\frac{1}{ Z(\boldsymbol{\theta}_\delta) }
			\sum_{i=0}^{2}
		c_i\, \mathbb{E}\big(Y^{r+i} \mathds{1}_{\{a\leqslant Y\leqslant b\}}\big), \quad Y\sim \text{BBeta}(\boldsymbol{\theta}_0).
	\end{eqnarray*}
Since
\begin{eqnarray*}
\mathbb{E}\big(Y^{r+i} \mathds{1}_{\{a\leqslant Y\leqslant b\}}\big)
=
\dfrac{B_{b}(\alpha+r+i, \beta)}{B(\alpha, \beta)}
-
\dfrac{B_{a}(\alpha+r+i, \beta)}{B(\alpha, \beta)},
\end{eqnarray*}
the proof of theorem follows.
\end{proof}

Taking $r = 0$, $b=x$ and $a = 0$ in Theorem \ref{truncated-moments}, we get the formula \eqref{CDF} for the CDF.
%
Letting $r = 0$, $b=1$ and $a = x$ in Theorem \ref{truncated-moments}, we get the formula \eqref{SF} for the SF.

\begin{corollary}[Mean residual life function]
	If $X\sim \text{BBeta}(\boldsymbol{\theta}_\delta)$ then mean residual life function of $X$, defined by ${\rm MRL}(x,\boldsymbol{\theta}_\delta)=\int_{x}^{1} S(t;\boldsymbol{\theta}_\delta)\, {\rm d}t/S(x;\boldsymbol{\theta}_\delta)$, is written as
	{
	\begin{eqnarray*}
	{\rm MRL}(x,\boldsymbol{\theta}_\delta)=
	\dfrac{	\sum_{i=0}^{2}
		c_i\,
		\big\{
		\big[B(\alpha+i+1, \beta)-x B(\alpha+i, \beta)\big]
		-
		\big[B_{x}(\alpha+i+1, \beta)-x B_{x}(\alpha+i, \beta)\big]
		\big\}}{\sum_{i=0}^{2}
		c_i\,
		\big[
		B(\alpha+i, \beta)
		-
		B_{x}(\alpha+i, \beta)\big]},
	\end{eqnarray*}
}
	where $c_0=1+\rho$, $c_1=-2\delta$ and $c_2=\delta^2$.
\end{corollary}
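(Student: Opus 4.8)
The plan is to reduce the claimed formula for $\mathrm{MRL}(x,\boldsymbol{\theta}_\delta)$ to an application of Theorem \ref{truncated-moments} together with the already-established expression \eqref{SF} for the survival function. Since $\mathrm{MRL}(x,\boldsymbol{\theta}_\delta)$ is by definition $\int_x^1 S(t;\boldsymbol{\theta}_\delta)\,\mathrm{d}t$ divided by $S(x;\boldsymbol{\theta}_\delta)$, and the denominator is given verbatim by \eqref{SF} (equivalently by Theorem \ref{truncated-moments} with $r=0$, $a=x$, $b=1$), the entire content of the corollary lies in evaluating the integral in the numerator.

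First I would substitute \eqref{SF} into the integral, writing
\[
\int_x^1 S(t;\boldsymbol{\theta}_\delta)\,\mathrm{d}t
=
\frac{1}{Z(\boldsymbol{\theta}_\delta)}
\sum_{i=0}^2 c_i
\int_x^1
\left[
\frac{B(\alpha+i,\beta)}{B(\alpha,\beta)}
-
\frac{B_t(\alpha+i,\beta)}{B(\alpha,\beta)}
\right]\mathrm{d}t,
\]
so that the sum and the constants $c_i$ pull outside and the problem decouples into computing, for each $i\in\{0,1,2\}$, the single integral $\int_x^1 [\,B(\alpha+i,\beta)-B_t(\alpha+i,\beta)\,]\,\mathrm{d}t$. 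The first term is trivial: $\int_x^1 B(\alpha+i,\beta)\,\mathrm{d}t = (1-x)B(\alpha+i,\beta)$. For the second term I would integrate by parts, taking $u=B_t(\alpha+i,\beta)$ and $\mathrm{d}v=\mathrm{d}t$, so $\mathrm{d}u = t^{\alpha+i-1}(1-t)^{\beta-1}\,\mathrm{d}t$ and $v=t$; this gives
\[
\int_x^1 B_t(\alpha+i,\beta)\,\mathrm{d}t
=
\Big[t\,B_t(\alpha+i,\beta)\Big]_x^1
-
\int_x^1 t\cdot t^{\alpha+i-1}(1-t)^{\beta-1}\,\mathrm{d}t,
\]
where $\big[t\,B_t(\alpha+i,\beta)\big]_x^1 = B(\alpha+i,\beta) - x\,B_x(\alpha+i,\beta)$ and the remaining integral is $\int_x^1 t^{\alpha+i}(1-t)^{\beta-1}\,\mathrm{d}t = B(\alpha+i+1,\beta) - B_x(\alpha+i+1,\beta)$, by the very definition of the incomplete beta function. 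Collecting the pieces and simplifying, the $B(\alpha+i,\beta)$ terms cancel and one is left with exactly the bracketed expression $\big[B(\alpha+i+1,\beta)-xB(\alpha+i,\beta)\big]-\big[B_x(\alpha+i+1,\beta)-xB_x(\alpha+i,\beta)\big]$ appearing in the statement.

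Dividing the resulting numerator by $S(x;\boldsymbol{\theta}_\delta)$ from \eqref{SF}, the common factor $1/Z(\boldsymbol{\theta}_\delta)$ cancels between numerator and denominator, yielding the stated formula. The only mildly delicate point — hardly an obstacle — is keeping track of signs and boundary contributions in the integration by parts and verifying the cancellation of the $B(\alpha+i,\beta)$ terms; there are no convergence issues since $\alpha>0$, $\beta>0$ guarantee all the incomplete beta integrals are finite on $[x,1]$. I would therefore present the argument compactly: state the reduction via \eqref{SF}, do the integration by parts once for a generic index $i$, and conclude.
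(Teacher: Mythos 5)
Your proof is correct, but it takes a more hands-on route than the paper. The paper first rewrites the numerator probabilistically: integration by parts (at the level of the survival function) gives the identity ${\rm MRL}(x,\boldsymbol{\theta}_\delta)=\mathbb{E}\big(X\mathds{1}_{\{X\geqslant x\}}\big)/S(x;\boldsymbol{\theta}_\delta)-x$, i.e.\ $\int_x^1 S(t;\boldsymbol{\theta}_\delta)\,{\rm d}t=\mathbb{E}\big(X\mathds{1}_{\{X\geqslant x\}}\big)-xS(x;\boldsymbol{\theta}_\delta)$, and then invokes Theorem \ref{truncated-moments} a second time, now with $r=1$, $a=x$, $b=1$, so that no manipulation of incomplete beta functions is needed at all. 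You instead use the theorem only through \eqref{SF} (the $r=0$ case) and redo the special-function work by hand, integrating $\int_x^1 B_t(\alpha+i,\beta)\,{\rm d}t$ by parts with $u=B_t(\alpha+i,\beta)$, $v=t$; your computation is right and lands on exactly the bracketed terms of the statement, so the two arguments are equivalent, with the paper's version trading your term-by-term calculus for one extra application of an already-proved lemma. Two cosmetic slips worth fixing: what cancels in your collection step is the pair $\pm B(\alpha+i,\beta)$ coming from $(1-x)B(\alpha+i,\beta)$ and the boundary term, leaving $-xB(\alpha+i,\beta)$ (which is part of the final bracket, not a cancelled term); and the common factor that drops out between numerator and denominator is $1/\big(Z(\boldsymbol{\theta}_\delta)B(\alpha,\beta)\big)$, not only $1/Z(\boldsymbol{\theta}_\delta)$.
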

\begin{proof}
Integration by parts gives
\begin{eqnarray}\label{MRL}
{\rm MRL}(x,\boldsymbol{\theta}_\delta)={1\over S(x,\boldsymbol{\theta}_\delta)} \, \mathbb{E}\big(X\mathds{1}_{\{X\geqslant x\}}\big) -x.
\end{eqnarray}
Taking $r=1$, $a=x$ and $b=1$ in Theorem \ref{truncated-moments}, we get
	\begin{eqnarray*}
	\mathbb{E}\big(X\mathds{1}_{\{X\geqslant x\}}\big)
	=
	\dfrac{1}{ Z(\boldsymbol{\theta}_\delta)  }
	\sum_{i=0}^{2}
	c_i\,
	\biggl[
	\dfrac{B(\alpha+i+1, \beta)}{B(\alpha, \beta)}
	-
	\dfrac{B_{x}(\alpha+i+1, \beta)}{B(\alpha, \beta)}
	\biggr].
\end{eqnarray*}
By replacing the above identity in \eqref{MRL}, the proof follows.
\end{proof}

By combining the formula \eqref{SF} of CDF and definition of the BBeta distribution, we obtain the formula \eqref{HR} for the HR.

\begin{corollary}[Real moments]\label{Real moments}
If $X\sim \text{BBeta}(\boldsymbol{\theta}_\delta)$ and $r> -\alpha$, then
\begin{eqnarray*}
\mathbb{E}(X^r)
	=
	\dfrac{1}{ Z(\boldsymbol{\theta}_\delta) }
	\biggl[
	(1+\rho)\,
	\dfrac{B(\alpha+r, \beta)}{B(\alpha, \beta)}
	-2\delta\,
	\dfrac{B(\alpha+r+1, \beta)}{B(\alpha, \beta)}
	+\delta^2\,
	\dfrac{B(\alpha+r+2, \beta)}{B(\alpha, \beta)}
	\biggr],
\end{eqnarray*}
\end{corollary}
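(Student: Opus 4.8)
The plan is to derive the real-moments formula directly from Theorem \ref{truncated-moments}, which already computes the truncated moments $\mathbb{E}\big(X^r\mathds{1}_{\{a\leqslant X\leqslant b\}}\big)$ for $X\sim\text{BBeta}(\boldsymbol{\theta}_\delta)$. First I would take $a=0$ and $b=1$ in that theorem, so that the indicator $\mathds{1}_{\{0\leqslant X\leqslant 1\}}$ equals $1$ almost surely (since $X$ is supported on $[0,1]$), and hence $\mathbb{E}\big(X^r\mathds{1}_{\{0\leqslant X\leqslant 1\}}\big)=\mathbb{E}(X^r)$.

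Next, I would simplify the resulting expression using the boundary values of the incomplete beta function. Recall that $B_1(a,b)=B(a,b)$ and $B_0(a,b)=0$ for $a>0$. Applying this with $a$ replaced by each of $\alpha+r$, $\alpha+r+1$, $\alpha+r+2$ (all positive because $r>-\alpha$), the term $B_b(\alpha+r+i,\beta)=B_1(\alpha+r+i,\beta)=B(\alpha+r+i,\beta)$ and $B_a(\alpha+r+i,\beta)=B_0(\alpha+r+i,\beta)=0$. Substituting $c_0=1+\rho$, $c_1=-2\delta$, $c_2=\delta^2$ and writing out the sum over $i=0,1,2$ then yields exactly the claimed formula
\begin{align*}
\mathbb{E}(X^r)
=
\dfrac{1}{ Z(\boldsymbol{\theta}_\delta) }
\biggl[
(1+\rho)\,
\dfrac{B(\alpha+r, \beta)}{B(\alpha, \beta)}
-2\delta\,
\dfrac{B(\alpha+r+1, \beta)}{B(\alpha, \beta)}
+\delta^2\,
\dfrac{B(\alpha+r+2, \beta)}{B(\alpha, \beta)}
\biggr].
\end{align*}

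There is essentially no obstacle here: the content was already done in Theorem \ref{truncated-moments}, and this corollary is the special case $a=0$, $b=1$. The only point that deserves a word of care is the condition $r>-\alpha$, which guarantees that all the shifted beta functions $B(\alpha+r+i,\beta)$ with $i=0,1,2$ are finite (equivalently, that the integrals $\int_0^1 x^{\alpha+r+i-1}(1-x)^{\beta-1}\,\mathrm{d}x$ converge at $x=0$), so that the truncated-moment formula extends to the full range of integration. I would simply note this and conclude.
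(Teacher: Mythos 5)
Your proposal is correct and follows essentially the same route as the paper: specialize Theorem \ref{truncated-moments} to $a=0$, $b=1$, use $B_{0}(\alpha+r+i,\beta)=0$ and $B_{1}(\alpha+r+i,\beta)=B(\alpha+r+i,\beta)$, and substitute $c_0=1+\rho$, $c_1=-2\delta$, $c_2=\delta^2$. Your extra remark on $r>-\alpha$ guaranteeing finiteness of the shifted beta integrals is a welcome clarification but not a departure from the paper's argument.
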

\begin{proof}
By taking  $b=1$ and $a = 0$ in Theorem \ref{truncated-moments} we have the following:	
\begin{eqnarray*}
	\mathbb{E}(X^r)
	=
	\dfrac{1}{ Z(\boldsymbol{\theta}_\delta)  }
	\sum_{i=0}^{2}
	c_i\,
	\dfrac{B(\alpha+r+i, \beta)}{B(\alpha, \beta)},
\end{eqnarray*}	
where $c_0=1+\rho$, $c_1=-2\delta$ and $c_2=\delta^2$.
\end{proof}


\begin{corollary}[Raw moments]\label{moments}
	If $X\sim \text{BBeta}(\boldsymbol{\theta}_\delta)$ and $k\in[0,+\infty)\cap\mathbb{Z}$, then
	\begin{align*}
		\mathbb{E}(X^k)
		=
		\dfrac{1}{Z(\boldsymbol{\theta}_\delta)}
		    \Biggl(\prod_{j=0}^{k-1}\frac{\alpha+j}{\alpha+\beta+j}\Biggr)
		   \biggl[
		   1+\rho	
		   -
		   {2\delta(\alpha+k)\over \alpha+\beta+k}
		   +
		  {\delta^2(\alpha+k)(\alpha+k+1)\over (\alpha+\beta+k)(\alpha+\beta+k+1)}
		   \biggl],
	\end{align*}
where we are conventioning that $\prod_{j=0}^{-1}({\alpha+j})/({\alpha+\beta+j})=1$.
\end{corollary}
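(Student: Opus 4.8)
The plan is to derive the raw moments from the real-moments corollary (Corollary \ref{Real moments}) by specializing $r=k$ for a nonnegative integer $k$ and then simplifying the ratios of beta functions into the products displayed in the statement. First I would write, from Corollary \ref{Real moments},
\begin{align*}
\mathbb{E}(X^k)
=
\frac{1}{Z(\boldsymbol{\theta}_\delta)}
\biggl[
(1+\rho)\,\frac{B(\alpha+k,\beta)}{B(\alpha,\beta)}
-2\delta\,\frac{B(\alpha+k+1,\beta)}{B(\alpha,\beta)}
+\delta^2\,\frac{B(\alpha+k+2,\beta)}{B(\alpha,\beta)}
\biggr].
\end{align*}
The whole computation then rests on the elementary identity $B(\alpha+m,\beta)/B(\alpha,\beta)=\prod_{j=0}^{m-1}(\alpha+j)/(\alpha+\beta+j)$ for integer $m\geqslant 0$, which follows from $B(a,b)=\Gamma(a)\Gamma(b)/\Gamma(a+b)$ together with $\Gamma(a+1)=a\Gamma(a)$ applied repeatedly (telescoping the Gamma ratios).

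Next I would factor out the common factor $\prod_{j=0}^{k-1}(\alpha+j)/(\alpha+\beta+j)$, which is exactly $B(\alpha+k,\beta)/B(\alpha,\beta)$, from each of the three terms. For the second term, $B(\alpha+k+1,\beta)/B(\alpha,\beta)=\bigl[B(\alpha+k,\beta)/B(\alpha,\beta)\bigr]\cdot(\alpha+k)/(\alpha+\beta+k)$; for the third term, one multiplies by the additional factor $(\alpha+k)(\alpha+k+1)/\bigl[(\alpha+\beta+k)(\alpha+\beta+k+1)\bigr]$. Collecting these inside a bracket yields precisely
\begin{align*}
\mathbb{E}(X^k)
=
\frac{1}{Z(\boldsymbol{\theta}_\delta)}
\Biggl(\prod_{j=0}^{k-1}\frac{\alpha+j}{\alpha+\beta+j}\Biggr)
\biggl[
1+\rho-\frac{2\delta(\alpha+k)}{\alpha+\beta+k}
+\frac{\delta^2(\alpha+k)(\alpha+k+1)}{(\alpha+\beta+k)(\alpha+\beta+k+1)}
\biggr],
\end{align*}
which is the claimed formula; the convention $\prod_{j=0}^{-1}=1$ handles the case $k=0$, where the bracket reduces to $Z(\boldsymbol{\theta}_\delta)$ itself (by \eqref{partition-function}) and gives $\mathbb{E}(X^0)=1$, a useful sanity check.

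There is no real obstacle here: the statement is a direct corollary, and the only content is the bookkeeping of the telescoping Gamma-function ratios. The one point requiring a word of care is ensuring the index ranges line up — that $B(\alpha+k,\beta)/B(\alpha,\beta)$ is genuinely the stated product over $j=0,\dots,k-1$ (empty when $k=0$), and that the extra factors pulled out of the $-2\delta$ and $\delta^2$ terms have the arguments $\alpha+k$ and $\alpha+k+1$ rather than $\alpha$ and $\alpha+1$. Once that indexing is checked, the proof is complete.
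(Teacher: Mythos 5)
Your proposal is correct and follows essentially the same route as the paper: setting $r=k$ in Corollary \ref{Real moments} and applying the recurrence $B(\alpha+m,\beta)=B(\alpha,\beta)\prod_{j=0}^{m-1}(\alpha+j)/(\alpha+\beta+j)$ (the paper's \eqref{rel-rec}), then factoring out the common product. You merely spell out the final factoring step, which the paper leaves as ``follows immediately,'' and your $k=0$ sanity check matches the stated convention.
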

\begin{proof}
 By taking $r=k$ in Corollary \ref{Real moments} and using the simple recurrence relation
\begin{eqnarray}\label{rel-rec}
	B(x+k,y)=B(x,y) \, \prod_{j=0}^{k-1}\frac{x+j}{x+y+j}
\end{eqnarray}
we have
	\begin{eqnarray*}
	\mathbb{E}(X^k)
	=
	\dfrac{1}{ Z(\boldsymbol{\theta}_\delta)  }
	\sum_{i=0}^{2}
	c_i
\prod_{j=0}^{k+i-1}\frac{\alpha+j}{\alpha+\beta+j},
\end{eqnarray*}	
where $c_0=1+\rho$, $c_1=-2\delta$ and $c_2=\delta^2$.
From the above formula the proof follows immediately.
\end{proof}
As a consequence of the above corollary, the closed expressions for the standardized moments, variance, skewness and kurtosis of the bimodal Beta r.v. $X$ are easily obtained.

\begin{rem1}\label{obs-1}
	Taking $\delta=0$ in in Corollaries \ref{Real moments} and \ref{moments}  we obtain the following known formulas:
\begin{align*}
	\mathbb{E}(Y^r)=
	\dfrac{B(\alpha+t, \beta)}{B(\alpha, \beta)}, \ r>-\alpha;
	\quad 
	\mathbb{E}(Y^k)
	=
	\prod_{j=0}^{k-1}\frac{\alpha+j}{\alpha+\beta+j}, \
	k\in[0,+\infty)\cap\mathbb{Z}; \quad Y\sim \text{BBeta}(\boldsymbol{\theta}_0).
\end{align*}	
\end{rem1}

An immediate application of Corollary \ref{moments} provides the following result.
\begin{corollary}
	If $X\sim \text{BBeta}(\boldsymbol{\theta}_\delta)$ then
\begin{align*}
\mathbb{E}(X)&=
\dfrac{1}{Z(\boldsymbol{\theta}_\delta)} \,
\frac{ \alpha}{\alpha+\beta}\,
\biggl[
1+\rho		   -
{2\delta(\alpha+1)\over \alpha+\beta+1}
+
{\delta^2(\alpha+1)(\alpha+2)\over (\alpha+\beta+1)(\alpha+\beta+2)}
\biggl];
\\[0,2cm]
\mathbb{E}(X^2)&=
\dfrac{1}{Z(\boldsymbol{\theta}_\delta)}\,
\frac{\alpha(\alpha+1)}{(\alpha+\beta)(\alpha+\beta+1)}\,
\biggl[
1+\rho	   -
{2\delta(\alpha+2)\over \alpha+\beta+2}
+
{\delta^2(\alpha+2)(\alpha+3)\over (\alpha+\beta+2)(\alpha+\beta+3)}
\biggl].
\end{align*}
\end{corollary}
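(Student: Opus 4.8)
The plan is to derive both moment formulas as immediate specializations of Corollary~\ref{moments}. The statement to be proved gives closed expressions for $\mathbb{E}(X)$ and $\mathbb{E}(X^2)$ when $X\sim\text{BBeta}(\boldsymbol{\theta}_\delta)$, and Corollary~\ref{moments} already supplies $\mathbb{E}(X^k)$ for every nonnegative integer $k$, so the entire argument is a substitution $k=1$ and $k=2$ followed by a routine simplification of the finite product.

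First I would write down Corollary~\ref{moments} with $k=1$. The product $\prod_{j=0}^{0}(\alpha+j)/(\alpha+\beta+j)$ collapses to the single factor $\alpha/(\alpha+\beta)$, and inside the bracket every occurrence of $k$ is replaced by $1$: the term $2\delta(\alpha+k)/(\alpha+\beta+k)$ becomes $2\delta(\alpha+1)/(\alpha+\beta+1)$, and $\delta^2(\alpha+k)(\alpha+k+1)/[(\alpha+\beta+k)(\alpha+\beta+k+1)]$ becomes $\delta^2(\alpha+1)(\alpha+2)/[(\alpha+\beta+1)(\alpha+\beta+2)]$. This yields exactly the claimed expression for $\mathbb{E}(X)$.

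Next I would repeat the substitution with $k=2$. Now the product runs over $j=0,1$, giving $\frac{\alpha}{\alpha+\beta}\cdot\frac{\alpha+1}{\alpha+\beta+1}=\frac{\alpha(\alpha+1)}{(\alpha+\beta)(\alpha+\beta+1)}$, while the bracket becomes $1+\rho-\frac{2\delta(\alpha+2)}{\alpha+\beta+2}+\frac{\delta^2(\alpha+2)(\alpha+3)}{(\alpha+\beta+2)(\alpha+\beta+3)}$. Pairing these two pieces against the factor $1/Z(\boldsymbol{\theta}_\delta)$ reproduces the stated formula for $\mathbb{E}(X^2)$, completing the proof.

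There is essentially no obstacle here: the only thing to be careful about is the bookkeeping of the product indices (making sure $k=1$ gives one factor and $k=2$ gives two) and confirming that the convention $\prod_{j=0}^{-1}=1$ from Corollary~\ref{moments} is not needed for these cases since both exponents are at least $1$. Since both identities are verified by direct inspection, the proof is a two-line invocation of Corollary~\ref{moments}.
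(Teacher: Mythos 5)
Your proposal is correct and matches the paper's approach exactly: the paper presents this corollary as "an immediate application of Corollary \ref{moments}," i.e.\ the substitutions $k=1$ and $k=2$ with the finite product evaluated as you describe. Nothing further is needed.
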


\begin{rem1}
The deformed moment generating function of BBeta r.v. $X$ is given by the following expression:
\begin{equation*}
\mathbb{E}\big[\exp_q(tX)\big]
=	
{\Gamma(\alpha)\Gamma(\beta) (A_1 + A_2) \over {B}(\alpha,\beta)\Gamma(\alpha+\beta) Z(\boldsymbol{\theta}_{\delta})},
\quad q \in [0,1), t \geqslant 0,
\end{equation*}
\noindent
where
$\exp_q(tx)=[1 + (1 - q)tx]^{1/ (1-q)}$ denotes the deformed exponential function,
$A_1=(1+\rho){H}_2({1/(q-1)},\alpha,\alpha+\beta,t(q-1))$ and
{\small
\begin{align*}
A_2={
-2(1+\alpha+\beta) {H}_2({1 \over q-1},1+\alpha,1+\alpha+\beta,t(q-1))
+
(1+\alpha) \delta {H}_2({1 \over q-1},2+\alpha,2+\alpha+\beta,t(q-1))
\over
(\alpha + \beta )(1 + \alpha + \beta)/(\delta \alpha)}.
\end{align*}
}
Here,
$H_2(a,b,c,z)$ is the hypergeometric function $_2 F_1 (a,b;c;z)$. By using L'Hospital's rule we have that, if $q\to 1$, $\exp_q(tx)$ drops to $\exp(tx)$.
\end{rem1}

\begin{corollary}[Moment generating function]\label{momentMGF}
	If $X\sim \text{BBeta}(\boldsymbol{\theta}_\delta)$ and $t \geqslant 0$, then
	{\small
	\begin{align*}
	\mathbb{E}\big[\exp(tX)\big]
	&=\Gamma(\alpha + \beta){ t^{-1}
	\biggl\{1+\rho +
	\frac{\alpha \delta [\delta(1+\alpha)-2(1+\alpha+\beta)]}{(\alpha+\beta)(1+\alpha+\beta)}\biggr\}^{-1}}
	\\[0,2cm] \nonumber
	&
	\times\left\{
	t[(\delta-1)^2 + \rho]
	-\beta \delta^2 H_1(\alpha,\alpha+\beta,t)
	+
	 \beta \delta [\delta(\alpha+\beta)+t(2-\delta)] H_1(\alpha,1+\alpha+\beta,t)
	\right\},
\end{align*}
}
where
$H_1(a,b,c)$ is the regularized confluent hypergeometric function $_1 F_1(a;b;z)/\Gamma(b)$.

\end{corollary}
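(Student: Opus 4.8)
The plan is to compute $\mathbb{E}[\exp(tX)]=\int_0^1 e^{tx}f(x;\boldsymbol{\theta}_\delta)\,\mathrm{d}x$ directly from the density. Substituting \eqref{beta-density} gives
\[
\mathbb{E}\big[\exp(tX)\big]=\frac{1}{Z(\boldsymbol{\theta}_\delta)\,B(\alpha,\beta)}\int_0^1 e^{tx}\,\big[\rho+(1-\delta x)^2\big]\,x^{\alpha-1}(1-x)^{\beta-1}\,\mathrm{d}x .
\]
The first step is to expand the quadratic weight in powers of $1-x$ rather than of $x$: writing $1-\delta x=(1-\delta)+\delta(1-x)$ one obtains $\rho+(1-\delta x)^2=[(\delta-1)^2+\rho]+2\delta(1-\delta)(1-x)+\delta^2(1-x)^2$. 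Splitting the integral accordingly produces three integrals of the shape $\int_0^1 e^{tx}x^{\alpha-1}(1-x)^{\beta+j-1}\,\mathrm{d}x$, $j=0,1,2$.

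The second step is to evaluate each of these with Kummer's integral representation, $\int_0^1 e^{zs}s^{a-1}(1-s)^{c-a-1}\,\mathrm{d}s=B(a,c-a)\,{}_1F_1(a;c;z)$, which applies since here $a=\alpha>0$ and $c-a=\beta+j>0$; this gives $B(\alpha,\beta+j)\,{}_1F_1(\alpha;\alpha+\beta+j;t)$. Using $B(\alpha,\beta+1)/B(\alpha,\beta)=\beta/(\alpha+\beta)$ and $B(\alpha,\beta+2)/B(\alpha,\beta)=\beta(\beta+1)/[(\alpha+\beta)(\alpha+\beta+1)]$ one arrives at an explicit linear combination of ${}_1F_1(\alpha;\alpha+\beta;t)$, ${}_1F_1(\alpha;\alpha+\beta+1;t)$ and ${}_1F_1(\alpha;\alpha+\beta+2;t)$ with coefficients in $\rho,\delta,\alpha,\beta$.

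The third step is to collapse the three confluent hypergeometric functions onto two. For this I would invoke the three-term contiguous relation in the lower parameter, $b(b-1)\,{}_1F_1(a;b-1;z)+b(1-b-z)\,{}_1F_1(a;b;z)+(b-a)z\,{}_1F_1(a;b+1;z)=0$, with $a=\alpha$ and $b=\alpha+\beta+1$, to eliminate ${}_1F_1(\alpha;\alpha+\beta+2;t)$ in favour of ${}_1F_1(\alpha;\alpha+\beta;t)$ and ${}_1F_1(\alpha;\alpha+\beta+1;t)$; the division by $z=t$ built into that relation is exactly what produces the overall factor $t^{-1}$ in the statement. Finally, collecting the coefficients of the two surviving functions, one checks that the normalizing bracket $1+\rho+\alpha\delta[\delta(1+\alpha)-2(1+\alpha+\beta)]/[(\alpha+\beta)(1+\alpha+\beta)]$ coincides with $Z(\boldsymbol{\theta}_\delta)$ from \eqref{partition-function}, and rewrites the two ${}_1F_1$'s via $H_1(a,b,z)={}_1F_1(a;b;z)/\Gamma(b)$ together with $\Gamma(\alpha+\beta+1)=(\alpha+\beta)\Gamma(\alpha+\beta)$, which yields the claimed closed form (the term $t[(\delta-1)^2+\rho]$ inside the inner brace being understood as part of the coefficient of $H_1(\alpha,\alpha+\beta,t)$).

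The main obstacle is purely computational: carrying the Beta-function weights and the coefficients coming out of the contiguous relation through the final collection so that everything consolidates into exactly the two $H_1$-terms with the displayed coefficients, and verifying the $t\to 0^+$ behavior so that the apparent pole from $t^{-1}$ is removable and consistent with $\mathbb{E}[\exp(0)]=1$. No analytic difficulty arises once the integral representation and this single contiguity are in place; alternatively, the identity can be recovered by letting $q\to 1$ in the deformed generating function of the preceding remark via L'Hospital's rule, or by summing $\sum_{k\geqslant 0}t^k\mathbb{E}(X^k)/k!$ using Corollary \ref{moments}, but the direct computation above is the most transparent route.
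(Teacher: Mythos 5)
The paper states Corollary \ref{momentMGF} without any proof at all, so there is no argument of the authors' to set yours against; judged on its own, your plan is correct and complete-able. The expansion $\rho+(1-\delta x)^2=[(\delta-1)^2+\rho]+2\delta(1-\delta)(1-x)+\delta^2(1-x)^2$ is the right move, since it keeps the upper parameter of every Kummer function equal to $\alpha$ (expanding in powers of $x$ would instead give ${}_1F_1(\alpha+j;\alpha+\beta+j;t)$ and call for different contiguous relations), and the relation $b(b-1)\,{}_1F_1(a;b-1;z)+b(1-b-z)\,{}_1F_1(a;b;z)+(b-a)z\,{}_1F_1(a;b+1;z)=0$ with $a=\alpha$, $b=\alpha+\beta+1$, $z=t$ performs exactly the reduction you describe and is the source of the overall $t^{-1}$. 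Carrying the bookkeeping through, with $B(\alpha,\beta+1)/B(\alpha,\beta)=\beta/(\alpha+\beta)$ and $B(\alpha,\beta+2)/B(\alpha,\beta)=\beta(\beta+1)/[(\alpha+\beta)(\alpha+\beta+1)]$, one lands on
\begin{align*}
\mathbb{E}\big[\exp(tX)\big]
=\frac{\Gamma(\alpha+\beta)}{t\,Z(\boldsymbol{\theta}_\delta)}
\Big\{\big(t[(\delta-1)^2+\rho]-\beta\delta^2\big)\,H_1(\alpha,\alpha+\beta,t)
+\beta\delta\big[\delta(\alpha+\beta)+t(2-\delta)\big]\,H_1(\alpha,1+\alpha+\beta,t)\Big\},
\end{align*}
and the brace in the statement is indeed $Z(\boldsymbol{\theta}_\delta)$. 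This vindicates your parenthetical remark and shows the corollary as printed contains a misprint: read literally, with $t[(\delta-1)^2+\rho]$ a free-standing term, the formula already fails for $\delta=\rho=0$, returning the constant $\Gamma(\alpha+\beta)$ instead of the Beta MGF ${}_1F_1(\alpha;\alpha+\beta;t)$; a numerical check at $\alpha=\beta=1$, $\delta=2$, $\rho=0$, $t=1$ gives $15e-39$ for the corrected form above (matching direct integration of $3e^x(1-2x)^2$) versus $12e-33$ for the literal statement. So your route is sound; in a written version the only things to add are the explicit coefficient collection displayed above and the observation that the apparent pole at $t=0$ is removable because the brace vanishes there, which also settles consistency with $\mathbb{E}[\exp(0)]=1$.
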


\vspace*{-0,5cm}
\section{Further properties}
\label{sect:4}
\noindent

In this section, we consider some properties of the BBeta distribution, such as the
entropy measures, stochastic representation and identifiability.

\subsection{Entropy measures}
\noindent

Let $X\sim \text{BBeta}(\boldsymbol{\theta}_\delta)$.
The Tsallis \citep{Tsallis1988} entropy associated with a non-negative random variable $X$ is defined by
\begin{eqnarray*}
S_q(X)=\dfrac{1}{q-1}\, \bigg[1-\int_{0}^{1} f^q(x;\boldsymbol{\theta}_\delta) \, {\rm d}x\bigg], \quad q\neq 1.
\end{eqnarray*}

The quadratic entropy \citep{Rao2010} is defined as
\begin{eqnarray*}
H_2(X)=-\log \int_{0}^1 f^2(x;\boldsymbol{\theta}_\delta) \, {\rm d}x.
\end{eqnarray*}

We also define the Shannon entropy \citep{Shannon1948} as
\begin{eqnarray*}
H_1(X)=
-
\int_{0}^{1}
f(x;\boldsymbol{\theta}_\delta)
\log f(x;\boldsymbol{\theta}_\delta) \, {\rm d}x.
\end{eqnarray*}

By using L'Hospital's Rule, we have that, if $q\to 1$, then $S_q(X) \to H_1(X)$ and the usual definition of Shannon's entropy is recovered.

	\begin{thm1}[Tsallis entropy] 
	Let $X\sim \text{BBeta}(\boldsymbol{\theta}_\delta)$, $\alpha_q=q(\alpha-1)+1>0$, $\beta_q=q(\beta-1)+1>0$, $\rho\geqslant 1$ and $0\leqslant q<1$. Then
		\begin{align*}
		\int_{0}^{1} f^q(x;\boldsymbol{\theta}_\delta) \, {\rm d}x
		\leqslant
\frac{ {B}(\alpha_q,\beta_q)}{ [Z(\boldsymbol{\theta}_\delta) {B}(\alpha,\beta)]^q } \,
\biggl[
1+q\rho
+
q\delta^2\,
{{B}(\alpha_q+2,\beta_q)\over {B}(\alpha_q,\beta_q)}
-
2q\delta\,
{{B}(\alpha_q+1,\beta_q)\over {B}(\alpha_q,\beta_q)}
\biggr].
		\end{align*}
	Moreover, the two sides are equal if and only if $q$ is sufficiently close to 1.
	
	In particular, for $\alpha_q>0$, $\beta_q>0$ and $0\leqslant q<1$, the Tsallis entropy  exists.	
\end{thm1}
\begin{proof}
	By definition of BBeta PDF, we have
\begin{eqnarray}\label{exp-1}
\int_{0}^{1} f^q(x;\boldsymbol{\theta}_\delta) \, {\rm d}x
=
\frac{1}{ [Z(\boldsymbol{\theta}_\delta) {B}(\alpha,\beta)]^q } \,
\int_{0}^{1}
[\rho+(1-\delta{x})^2]^q
x^{q(\alpha-1)} \, (1-x)^{q(\beta-1)}
\, {\rm d}x.
\end{eqnarray}
By using the inequality (see, e.g., \cite{Hardy34})
$
a^b\leqslant 1+(a-1)b, \ \text{for} \ b\in[0,1], \ a\geqslant 1,
$
the expression on the right-hand side of \eqref{exp-1} is at most
\begin{align*}
&\frac{1}{ [Z(\boldsymbol{\theta}_\delta) {B}(\alpha,\beta)]^q } \,
\int_{0}^{1}
\big\{1+[\rho-1+(1-\delta x)^2] q\big\}\, x^{q(\alpha-1)} \, (1-x)^{q(\beta-1)}
\, {\rm d}x
\\[0,2cm]
&=
\frac{{B}(q(\alpha-1)+1,q(\beta-1)+1)}{ [Z(\boldsymbol{\theta}_\delta) {B}(\alpha,\beta)]^q } \,
[1+q\rho+q\delta^2\mathbb{E}(Y^2)-2q\delta\mathbb{E}(Y)],
\end{align*}
where $Y\sim \text{BBeta}(q(\alpha-1)+1,q(\beta-1)+1)$.

By applying Remark \ref{obs-1}, the proof follows.	
\end{proof}

By using the identifiability (see Subsection \ref{Identifiability}), it is possible to write an upper bound
	for the Tsallis entropy and log$_q(f)$, where, for $x > 0$, $\log_q(x)=(x^{1-q}-1)/(1-q)$, $q\neq 1$, represents the deformed logarithm \citep{Tsallisbook09}. After having an upper bound for the Tsallis entropy and log$_q(f)$,
	the application of MLqE method for the estimation of the parameters of BBeta will be accurate. 
	
	If the proposed distribution can have entropies, existence of MGF (see Corollary \ref{momentMGF}), it is safe to apply for modelling on a data set. Otherwise, unboundness and nonexistence of moments for PDF cannot provide modelling many types of real data sets and free from computational error which can occur while performing optimization in order to get the estimators of parameters in the distribution \cite{gut13}.

\begin{proposition}[Quadratic entropy]
Let $X\sim \text{BBeta}(\boldsymbol{\theta}_\delta)$ with $\alpha>1/2$, $\beta>1/2$. Then
\begin{align*}
H_2(X)&=
-\log B(2\alpha-1,2\beta-1)
+
2\log Z(\boldsymbol{\theta}_\delta)
+
2\log B(\alpha,\beta)
-
\log\Biggl[
\sum_{i=0}^4
\widetilde{c}_i \prod_{j=0}^{i-1}\frac{\alpha+j}{\alpha+\beta+j}
\Biggr],
\end{align*}
where $\widetilde{c}_0=(1 + \rho)^2$, $\widetilde{c}_1=- 4 \delta(1+\rho)$, $\widetilde{c}_2=2 \delta^2(3+\rho)$, $\widetilde{c}_3=- 4 \delta^3$ and $\widetilde{c}_4=\delta^4$.	
\end{proposition}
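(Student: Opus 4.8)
The plan is to evaluate $\int_{0}^{1} f^2(x;\boldsymbol{\theta}_\delta)\,{\rm d}x$ in closed form and then take minus its logarithm. From the definition \eqref{beta-density},
\begin{align*}
\int_{0}^{1} f^2(x;\boldsymbol{\theta}_\delta)\,{\rm d}x
=
\frac{1}{[Z(\boldsymbol{\theta}_\delta)\,{B}(\alpha,\beta)]^{2}}
\int_{0}^{1} \big[\rho+(1-\delta x)^2\big]^{2}\, x^{2(\alpha-1)}\,(1-x)^{2(\beta-1)}\,{\rm d}x ,
\end{align*}
so everything reduces to the last integral. First I would expand the quartic factor as a polynomial in $x$: writing $\rho+(1-\delta x)^2=(1+\rho)-2\delta x+\delta^2 x^2$ and squaring gives $\big[\rho+(1-\delta x)^2\big]^{2}=\sum_{i=0}^{4}\widetilde c_i\,x^i$, and collecting terms yields precisely $\widetilde c_0=(1+\rho)^2$, $\widetilde c_1=-4\delta(1+\rho)$, $\widetilde c_2=2\delta^2(3+\rho)$, $\widetilde c_3=-4\delta^3$, $\widetilde c_4=\delta^4$, the coefficients in the statement.

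Next I would integrate term by term. Because $\alpha>1/2$ and $\beta>1/2$, each exponent $2\alpha-2+i$ and $2\beta-2$ is strictly larger than $-1$, so every monomial is integrable at both endpoints and $\int_0^1 x^{2\alpha-2+i}(1-x)^{2\beta-2}\,{\rm d}x={B}(2\alpha-1+i,\,2\beta-1)$. Hence
\begin{align*}
\int_{0}^{1} f^2(x;\boldsymbol{\theta}_\delta)\,{\rm d}x
=
\frac{1}{[Z(\boldsymbol{\theta}_\delta)\,{B}(\alpha,\beta)]^{2}}\sum_{i=0}^{4}\widetilde c_i\,{B}(2\alpha-1+i,\,2\beta-1) .
\end{align*}
Applying the recurrence relation \eqref{rel-rec} to factor the common term ${B}(2\alpha-1,2\beta-1)$ out of each summand then turns the sum into ${B}(2\alpha-1,2\beta-1)\sum_{i=0}^{4}\widetilde c_i\prod_{j=0}^{i-1}\frac{2\alpha-1+j}{2\alpha+2\beta-2+j}$, with the empty product understood as $1$, as in Corollary \ref{moments}.

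Finally, taking $H_2(X)=-\log\int_{0}^{1} f^2(x;\boldsymbol{\theta}_\delta)\,{\rm d}x$ and splitting the logarithm via $\log(ab)=\log a+\log b$ produces the claimed closed form: the terms $2\log Z(\boldsymbol{\theta}_\delta)$, $2\log {B}(\alpha,\beta)$ and $-\log {B}(2\alpha-1,2\beta-1)$, minus the logarithm of the bracketed sum. I do not expect a genuine obstacle: the whole argument is just the polynomial expansion, the Beta integral, and the recursion \eqref{rel-rec}. The only point needing care is the integrability of the individual monomials near $0$ and $1$, which is exactly what the hypotheses $\alpha>1/2$, $\beta>1/2$ guarantee; moreover the bracketed sum equals ${B}(2\alpha-1,2\beta-1)^{-1}\int_0^1\big[\rho+(1-\delta x)^2\big]^{2}x^{2\alpha-2}(1-x)^{2\beta-2}\,{\rm d}x>0$, so its logarithm is well defined.
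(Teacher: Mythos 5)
Your route is essentially the paper's: the paper rewrites the integral as $\frac{B(2\alpha-1,2\beta-1)}{[Z(\boldsymbol{\theta}_\delta)B(\alpha,\beta)]^{2}}\,\mathbb{E}\{[\rho+(1-\delta Y)^{2}]^{2}\}$ with $Y\sim\mathrm{Beta}(2\alpha-1,2\beta-1)$ and expands the square, which is the same computation as your termwise beta-integration followed by the recurrence; your coefficients $\widetilde c_i$ and your use of $\alpha,\beta>1/2$ for integrability match the paper exactly.

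There is, however, one point you gloss over, and it matters. What your (correct) derivation produces is
$\int_0^1 f^2\,{\rm d}x=\frac{B(2\alpha-1,2\beta-1)}{[Z(\boldsymbol{\theta}_\delta)B(\alpha,\beta)]^{2}}\sum_{i=0}^{4}\widetilde c_i\prod_{j=0}^{i-1}\frac{2\alpha-1+j}{2\alpha+2\beta-2+j}$,
i.e.\ the bracketed sum involves the moments of $\mathrm{Beta}(2\alpha-1,2\beta-1)$. The proposition as stated instead has $\prod_{j=0}^{i-1}\frac{\alpha+j}{\alpha+\beta+j}$, the moments of $\mathrm{Beta}(\alpha,\beta)$, and these are not the same in general (for $\alpha=2$, $\beta=1$, $i=1$ one gets $3/4$ versus $2/3$). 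So your closing claim that this ``produces the claimed closed form'' is not literally true: your formula is the correct one, and the displayed statement contains a slip --- the same slip occurs in the paper's own proof, which introduces $Y\sim\mathrm{Beta}(2\alpha-1,2\beta-1)$ and then invokes the moment formula of Remark \ref{obs-1} without substituting $2\alpha-1$ and $2\beta-1$ into it. You should either flag this mismatch explicitly or state the corrected version with the ratios $\frac{2\alpha-1+j}{2\alpha+2\beta-2+j}$ in the bracket; as written, asserting agreement with the statement is the one gap in an otherwise sound argument.
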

\begin{proof}
Since $\alpha>1/2$ and $\beta>1/2$, by using definitions of density $f$ and expectation,
\begin{align}\label{exp-1-1}
\int_{0}^{1} f^2(x;\boldsymbol{\theta}_\delta) \, {\rm d}x
&=
\frac{{B}(2\alpha-1,2\beta-1)}{[Z(\boldsymbol{\theta}_\delta){B}(\alpha,\beta)]^2} \,
\int_{0}^{1}
[\rho+(1-\delta{x})^2]^2\,
{x^{2(\alpha-1)} \, (1-x)^{2(\beta-1)}\over {B}(2\alpha-1,2\beta-1)}
\, {\rm d}x \nonumber
\\[0,3cm]
&=
\frac{{B}(2\alpha-1,2\beta-1)}{[Z(\boldsymbol{\theta}_\delta){B}(\alpha,\beta)]^2} \,
\mathbb{E}\big\{[\rho+(1-\delta{Y})^2]^2\big\},
\quad Y\sim{\rm Beta}(2\alpha-1,2\beta-1).
\end{align}
Developing the quadratic factor above,
\begin{align*}
\mathbb{E}\big\{[\rho+(1-\delta{Y})^2]^2\big\}
=
(1 + \rho)^2
- 4 \delta(1+\rho) \mathbb{E}(Y)
+ 2 \delta^2(3+\rho) \mathbb{E}(Y^2)
- 4 \delta^3 \mathbb{E}(Y^3) + \delta^4 \mathbb{E}(Y^4)
\end{align*}
and replacing in \eqref{exp-1-1}, we have
\begin{align*}
\int_{0}^{1} f^2(x;\boldsymbol{\theta}_\delta) \, {\rm d}x
=
\frac{{B}(2\alpha-1,2\beta-1)}{[Z(\boldsymbol{\theta}_\delta){B}(\alpha,\beta)]^2} \,
\sum_{i=0}^4
\widetilde{c}_i \mathbb{E}(Y^i),
\end{align*}
with $\widetilde{c}_0=(1 + \rho)^2$, $\widetilde{c}_1=- 4 \delta(1+\rho)$, $\widetilde{c}_2=2 \delta^2(3+\rho)$, $\widetilde{c}_3=- 4 \delta^3$ and $\widetilde{c}_4=\delta^4$.	

Hence, from Remark \ref{obs-1} and definition of quadratic entropy, the proof follows.
\end{proof}

\begin{lemma}[The $1$-th logarithmic moment about zero]\label{exp-log}
If $X\sim \text{BBeta}(\boldsymbol{\theta}_\delta)$, with $\alpha\geqslant 2$, then
\begin{eqnarray*}
	\mathbb{E}\big[\log (X)\big]
	=
	\frac{1}{ Z(\boldsymbol{\theta}_\delta) B(\alpha, \beta)}
	\sum_{i=0}^{2}
	c_i \,
	\dfrac{\partial B(\alpha+i, \beta)}{\partial\alpha},
\end{eqnarray*}
where $c_0=1+\rho$, $c_1=-2\delta$ and $c_2=\delta^2$, and $\psi(x)=\Gamma'(x)/\Gamma(x)$ is the digamma function.
\end{lemma}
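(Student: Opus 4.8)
The plan is to start from the definition $\mathbb{E}\big[\log(X)\big]=\int_0^1 \log(x)\,f(x;\boldsymbol{\theta}_\delta)\,{\rm d}x$ and to expand the quadratic factor appearing in the numerator of the BBeta density as $\rho+(1-\delta x)^2=c_0+c_1 x+c_2 x^2$, with $c_0=1+\rho$, $c_1=-2\delta$, $c_2=\delta^2$, exactly as in the proof of Corollary \ref{Real moments}. By linearity of the integral this reduces the claim to evaluating, for $i=0,1,2$, the integral
\[
\int_0^1 \log(x)\, x^{\alpha+i-1}\,(1-x)^{\beta-1}\,{\rm d}x .
\]

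The key observation is that this integral equals $\dfrac{\partial B(\alpha+i,\beta)}{\partial\alpha}$. Indeed, since $\dfrac{\partial}{\partial\alpha}x^{\alpha+i-1}=x^{\alpha+i-1}\log(x)$, differentiating the identity $B(\alpha+i,\beta)=\int_0^1 x^{\alpha+i-1}(1-x)^{\beta-1}\,{\rm d}x$ under the integral sign produces precisely the integral above. Dividing by $Z(\boldsymbol{\theta}_\delta)B(\alpha,\beta)$ and summing over $i\in\{0,1,2\}$ then yields the stated formula. If desired, one can further rewrite $\dfrac{\partial B(\alpha+i,\beta)}{\partial\alpha}=B(\alpha+i,\beta)\big[\psi(\alpha+i)-\psi(\alpha+i+\beta)\big]$, which is why the digamma function $\psi$ is recalled in the statement.

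The only delicate point — and the main obstacle — is justifying the interchange of differentiation and integration. For this I would fix a small neighborhood of the given value of $\alpha$ and exhibit an integrable dominating function for $\big|x^{\alpha+i-1}\log(x)\big|(1-x)^{\beta-1}$, uniformly over that neighborhood. Near $x=1$ the factor $(1-x)^{\beta-1}$ is integrable and $\log(x)$ is bounded; near $x=0$ one uses the elementary bound $x^{\alpha+i-1}\big|\log(x)\big|\leqslant C_\varepsilon\, x^{\alpha+i-1-\varepsilon}$ for a suitably small $\varepsilon>0$, which is integrable since $\alpha\geqslant 2$ (in fact $\alpha>0$ suffices, but the hypothesis $\alpha\geqslant 2$ makes the bound and the finiteness of $\mathbb{E}[\log(X)]$ immediate). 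With such a dominating function, Leibniz's rule for differentiation under the integral sign applies, and the proof is complete.
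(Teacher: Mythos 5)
Your proposal is correct and follows essentially the same route as the paper: expand $\rho+(1-\delta x)^2$ into the coefficients $c_0,c_1,c_2$, reduce to the integrals $\int_0^1 \log(x)\,x^{\alpha+i-1}(1-x)^{\beta-1}\,{\rm d}x$, and identify each as $\partial B(\alpha+i,\beta)/\partial\alpha$ via differentiation under the integral sign. The only difference is that you spell out an explicit dominating function to justify the Leibniz rule, whereas the paper simply asserts that its conditions are satisfied; your observation that $\alpha>0$ already suffices is also correct.
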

\begin{proof} By using definition of expectation of a function of a BBeta r.v. $X$, we have
\begin{align*}
		\mathbb{E}\big[\log (X)\big]=
			\frac{1}{ Z(\boldsymbol{\theta}_\delta) }
		\sum_{i=0}^{2}
		c_i\, \mathbb{E}\big[Y^{i} \log (Y)\big], \quad Y\sim \text{BBeta}(\boldsymbol{\theta}_0).
\end{align*}

If we prove that
\begin{eqnarray}\label{claim-1}
\mathbb{E}\big[Y^{i} \log(Y)\big]
=
\dfrac{1}{B(\alpha, \beta)}\,
\dfrac{\partial B(\alpha+i, \beta)}{\partial\alpha}, \quad i=0,1,2,\ldots,
\end{eqnarray}
the proof follows. In what remains of the proof, we show the validity of \eqref{claim-1}.

Indeed, since ${\partial y^{\alpha-1}/\partial \alpha}=(\log y) y^{\alpha-1}$, we get
\begin{eqnarray}\label{id-integral}
	\mathbb{E}\big[Y^{i} \log(Y)\big]
	&=&
	\dfrac{1}{B(\alpha, \beta)}\,
	\int_{0}^{1} \log(y)\, {y^{\alpha+i-1}(1 - y)^{\beta-1}} \, {\rm d} y \nonumber
	\\[0,15cm]
	&=&
	\dfrac{1}{B(\alpha, \beta)}\,
	\int_{0}^{1}
	\dfrac{\partial}{\partial\alpha}\big[
	{y^{\alpha+i-1}(1 - y)^{\beta-1}}\big] \, {\rm d} y.
\end{eqnarray}
%
%
A standard calculation shows that conditions of Leibniz integral rule are satisfied. Then we can interchange the derivative with the integral in \eqref{id-integral}. Hence
\begin{align*}
	\mathbb{E}\big[Y^{i} \log(Y)\big]
	&=
	\dfrac{1}{B(\alpha, \beta)}\,
	\dfrac{\partial}{\partial\alpha}
	\int_{0}^{1}  {y^{\alpha+i-1}(1 - y)^{\beta-1}} \, {\rm d} y	
=
	\dfrac{1}{B(\alpha, \beta)}\,
	\dfrac{\partial B(\alpha+i,\beta)}{\partial\alpha},
\end{align*}
and \eqref{claim-1} follows.

Thus, we complete the proof of lemma.
\end{proof}

\begin{rem1}\label{rem-main}
By using Lemma \ref{exp-log},  the identity
$
{\partial \log B(\alpha, \beta)}/{\partial \alpha}=\psi(\alpha)-\psi(\alpha+\beta)
$,
with
$\psi(x)=\Gamma'(x)/\Gamma(x)$,
and the recurrence relation \eqref{rel-rec}, we have
\begin{align*}
	\mathbb{E}\big[\log (X)\big]
	=
	\frac{1}{ Z(\boldsymbol{\theta}_\delta) }
	\sum_{i=0}^{2}
		c_i \,
	\biggl[
 \big(\psi(\alpha)-\psi(\alpha+\beta)\big) \prod_{j=0}^{i-1}\frac{\alpha+j}{\alpha+\beta+j}
	+
	{\partial\over \partial\alpha}
\prod_{j=0}^{i-1}\frac{\alpha+j}{\alpha+\beta+j}
\biggr].
\end{align*}
\end{rem1}

\begin{thm1}[Shannon entropy]
Let $X\sim \text{BBeta}(\boldsymbol{\theta}_\delta)$, with $\rho=0$, $\alpha\geqslant 2$ and $\delta=1$. Then
\begin{multline*}
\hspace*{-0.35cm}
H_1(X)=	  \log \Gamma(\alpha)+\log\Gamma(\beta) -\log\Gamma(\alpha + \beta)
+
\log\biggl[1
-2\, {\alpha\over \alpha+\beta}
+ {\alpha(\alpha+1)\over(\alpha+\beta)(\alpha+\beta+1)}\biggr]
\\[0,2cm]
 - 	\frac{(\alpha-1)\beta(\beta+1)}{ (\alpha+\beta)(\alpha+\beta+1)
 	-2\alpha(\alpha+\beta+1)
 	+
 	 {\alpha(\alpha+1)} }
\left[\psi(\alpha)-\psi(\alpha+\beta)-{2(\alpha+\beta)+1\over   (\alpha+\beta) (\alpha+\beta+1)}\right]
 \\[0,2cm]
 + \dfrac{(\alpha+\beta)(\alpha+\beta+1)(\beta+1)}{ (\alpha+\beta)(\alpha+\beta+1)
 	-2\alpha(\alpha+\beta+1)
 	+
  {\alpha(\alpha+1)}  }
 \sum_{i=0}^{2}
 c_i
 \sum_{k=1}^{\infty}
 \dfrac{1}{k}
 \prod_{j=0}^{k+i-1}\frac{\alpha+j}{\alpha+\beta+j},
\end{multline*}
whenever the series above converges absolutely.
Here, $c_0=c_2=1$ and $c_1=-2$, and $\psi(x)=\Gamma'(x)/\Gamma(x)$ is the digamma function.
\end{thm1}
\begin{proof}
	Since $\rho=0$ and $\delta=1$,
	a simple computation shows that
	\begin{multline}\label{Step-1}
	\int_{0}^{1}
	f(x;\boldsymbol{\theta}_1)
	\log f(x;\boldsymbol{\theta}_1) \, {\rm d}x	
	=
\mathbb{E}\big[\log f(X;\boldsymbol{\theta}_1)\big]
\\[0,2cm]
=
	  -\log Z(\boldsymbol{\theta}_1) -\log{B}(\alpha,\beta)
	 + (\alpha-1)\,\mathbb{E}\big[\log (X)\big] + (\beta+1)\,\mathbb{E}\big[\log(1-X)\big].
\end{multline}

Taking $c_0=c_2=1$ and $c_1=-2$ in Remark \ref{rem-main} we obtain
\begin{align}\label{Step-2}
	\mathbb{E}\big[\log (X)\big]
	=
	\frac{\beta(\beta+1)}{ Z(\boldsymbol{\theta}_1)\, (\alpha+\beta) (\alpha+\beta+1)}
\left[\psi(\alpha)-\psi(\alpha+\beta)-{2(\alpha+\beta)+1\over   (\alpha+\beta) (\alpha+\beta+1)}\right].
\end{align}

In what follows we provide a closed expression for the expectation
$\mathbb{E}\big[\log(1-X)\big]$.
Indeed, by using series representation of function $\log(1-x)$; also called Newton-Mercator series:
$
\log(1-x)
=
-
\sum_{k=1}^{\infty}
{x^k}/{k}
$
which converges for $0<x<1$, we have
\begin{align}\label{Step-3}
\mathbb{E}\big[\log(1-X)\big]
=
-\sum_{k=1}^{\infty}
\dfrac{\mathbb{E}(X^k)}{k}
=
-\dfrac{1}{ Z(\boldsymbol{\theta}_1)  }
\sum_{i=0}^{2}
c_i
\sum_{k=1}^{\infty}
\dfrac{1}{k}
\prod_{j=0}^{k+i-1}\frac{\alpha+j}{\alpha+\beta+j},
\end{align}
where in the second equality we used Corollary \ref{moments}.

By combining \eqref{Step-1}, \eqref{Step-2} and \eqref{Step-3}, and using definitions of normalization constant $Z(\boldsymbol{\theta}_1)$ and beta function ${B}(\alpha,\beta)$, the proof follows.
\end{proof}

\subsection{Stochastic representation}
\noindent

We say that a r.v. $Y$ has a non-standard Beta distribution bounded in $[a,b]$ interval and shape parameters $\alpha > 0$ and $\beta > 0$,
if its PDF is given by
\begin{equation*}
g(x; \alpha, \beta,a,b) = \frac{(x-a)^{\alpha-1}( b-x)^{\beta-1}}{B(\alpha, \beta) (b-a)^{\alpha+\beta-1}}, \quad a \leqslant {x} \leqslant b.
\end{equation*}

\begin{proposition}[Stochastic representation for $\delta<0$]\label{Stochastic representation}
	Suppose $Y_{k;\alpha,\beta}$ has a non-standard Beta distribution bounded in $[0,1/k]$ interval, with $k=1,2,3$, and shape parameters $\alpha > 0$ and $\beta > 0$. Let $W$ be a discrete distribution, so that $W=1$ or $W=2$ or $W=3$, each with probability
\begin{align*}
\pi_1={1+\rho\over Z(\boldsymbol{\theta}_\delta)}, \quad  \pi_2=-{2\alpha \delta\over Z(\boldsymbol{\theta}_\delta)\, (\alpha+\beta)},
\quad
\pi_3={\alpha(\alpha+1)\delta^2 \over Z(\boldsymbol{\theta}_\delta)\, (\alpha+\beta)(\alpha+\beta+1)},
\end{align*}
respectively, with $\delta<0$. A simple algebraic manipulation shows that $\pi_1+\pi_2+\pi_3=1$.

Assume that
$$Y
=
\sum_{k=1}^{3}Y_{k;\alpha+k-1,\beta} \delta_{W,k} \delta_{W,l},
\quad l=1,2,3,
$$
and that $W$ is independent of $Y_{k;\alpha,\beta}$, for each $k=1,2,3$. Here $\delta_{W,k}$ is the Kronecker delta function, i.e., $\delta_{W(\omega),k}$ is 1 if $W(\omega)=k$ for $\omega$ belonging to the random sample $\Omega$, and 0 otherwise.
	
	If $X=WY$
	then $X\sim \text{BBeta}(\boldsymbol{\theta}_\delta)$.	
	Conversely, if $X\sim \text{BBeta}(\boldsymbol{\theta}_\delta)$ then $X=WY$.
\end{proposition}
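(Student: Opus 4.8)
The plan is to recognize the BBeta density \eqref{beta-density} as a three--component mixture of ordinary Beta densities, with mixing weights $\pi_1,\pi_2,\pi_3$, and then to identify this mixture with the law of $X=WY$ by conditioning on $W$. First I would expand the quadratic factor, $\rho+(1-\delta x)^2=(1+\rho)-2\delta x+\delta^2 x^2$, to write
\[
f(x;\boldsymbol{\theta}_\delta)
=\frac{1}{Z(\boldsymbol{\theta}_\delta)}\sum_{i=0}^{2}c_i\,\frac{x^{\alpha+i-1}(1-x)^{\beta-1}}{B(\alpha,\beta)},
\qquad c_0=1+\rho,\ c_1=-2\delta,\ c_2=\delta^2 .
\]
Using the recurrence \eqref{rel-rec} to express $x^{\alpha+i-1}(1-x)^{\beta-1}/B(\alpha,\beta)=\big(\prod_{j=0}^{i-1}\tfrac{\alpha+j}{\alpha+\beta+j}\big)\,g(x;\alpha+i,\beta,0,1)$, where $g(x;\alpha+i,\beta,0,1)$ is the standard $\mathrm{Beta}(\alpha+i,\beta)$ density, this becomes
\[
f(x;\boldsymbol{\theta}_\delta)=\pi_1\,g(x;\alpha,\beta,0,1)+\pi_2\,g(x;\alpha+1,\beta,0,1)+\pi_3\,g(x;\alpha+2,\beta,0,1),
\]
with $\pi_1,\pi_2,\pi_3$ exactly the weights displayed in the statement. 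Since $\delta<0$ (and $\rho\geqslant0$), all three weights are nonnegative, and $\pi_1+\pi_2+\pi_3=1$ is immediate upon comparing with the explicit expression \eqref{partition-function} for $Z(\boldsymbol{\theta}_\delta)$.

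Next I would record the elementary scaling fact that, for shapes $\alpha',\beta'>0$, $k\,Y_{k;\alpha',\beta'}\sim\mathrm{Beta}(\alpha',\beta')$: a one--line change of variables carries the non--standard Beta density $g(x;\alpha',\beta',0,1/k)$ into $g(x;\alpha',\beta',0,1)$. For the direct implication I condition $X=WY$ on $\{W=k\}$: since $W$ is independent of the components and, conditionally on $\{W=k\}$, $Y$ is distributed as $Y_{k;\alpha+k-1,\beta}$ (which is what the Kronecker--delta expression for $Y$ encodes), the scaling fact gives that, conditionally on $\{W=k\}$, $X=kY$ is distributed as $\mathrm{Beta}(\alpha+k-1,\beta)$. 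The law of total probability then produces the density $\sum_{k=1}^{3}\pi_k\,g(x;\alpha+k-1,\beta,0,1)$, which by the previous paragraph equals $f(x;\boldsymbol{\theta}_\delta)$; hence $X\sim\text{BBeta}(\boldsymbol{\theta}_\delta)$.

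For the converse, given $X\sim\text{BBeta}(\boldsymbol{\theta}_\delta)$ I would build the pair $(W,Y)$ on a (possibly enlarged) probability space by prescribing the conditional probabilities $\mathbb{P}(W=k\mid X=x)=\pi_k\,g(x;\alpha+k-1,\beta,0,1)/f(x;\boldsymbol{\theta}_\delta)$ for $k=1,2,3$ --- a legitimate choice, since by the mixture identity these quantities are nonnegative and sum to $1$ --- and then setting $Y=X/W$. One then checks that $\mathbb{P}(W=k)=\pi_k$, that conditionally on $\{W=k\}$ one has $X\sim\mathrm{Beta}(\alpha+k-1,\beta)$, and hence that $Y=X/k$ has the non--standard Beta law on $[0,1/k]$ with shapes $\alpha+k-1,\beta$, i.e. the law of $Y_{k;\alpha+k-1,\beta}$, with $W$ independent of these components and $X=WY$ by construction. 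The only delicate point is this last bookkeeping step --- making precise the independence of $W$ from the $Y_{k;\cdot,\cdot}$ and realizing everything on a common space --- but it is just the standard hierarchical representation of a finite mixture and carries no analytic difficulty; the genuine content of the proposition is the algebraic mixture identity of the first paragraph.
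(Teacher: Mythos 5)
Your proposal is correct and takes essentially the same route as the paper: you condition on $W$ via the law of total probability and use the scaling fact that $kY_{k;\alpha+k-1,\beta}\sim\mathrm{Beta}(\alpha+k-1,\beta)$, the only cosmetic difference being that you match densities through the three-component mixture identity (which is exactly the paper's expansion in \eqref{eq-density}), whereas the paper matches CDFs against \eqref{CDF}. You additionally spell out an explicit hierarchical construction for the converse, which the paper leaves implicit once the distributional identity is established; that is a harmless (and slightly more careful) addition rather than a different method.
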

\begin{proof}
By Law of total probability and by independence, we get
\begin{align*}
\mathbb{P}(X\leqslant x)
=
\mathbb{P}(WY\leqslant x)
&=
\sum_{l=1}^{3}
\mathbb{P}(WY\leqslant x\vert W=l) \mathbb{P}(W=l)
\\[0,15cm]
&=
\sum_{l=1}^{3}
\mathbb{P}(lY_{l;\alpha+l-1,\beta}\leqslant x)\mathbb{P}(W=l)
=
\sum_{l=1}^{3}
\mathbb{P}(Y_{1;\alpha+l-1,\beta}\leqslant x)\pi_l,
\end{align*}
because $lY_{l;\alpha+l-1,\beta}=Y_{1;\alpha+l-1,\beta}\sim {B}(\alpha+l-1,\beta)$, for $l=1,2,3$.
Since for $Y \sim {B}(\alpha,\beta)$ its CDF is given by
$
F(x; \alpha, \beta) = I_{x}(\alpha, \beta),
\ 0 \leqslant {x} \leqslant 1,
$
by definition of $\pi_l$'s, the above expression is
\begin{align*}
=
\sum_{l=1}^{3}
I_{x}(\alpha+l-1, \beta)\pi_l
=
\dfrac{1}{ Z(\boldsymbol{\theta}_\delta) }
\biggl[
(1+\rho)\,
{I_{x}(\alpha, \beta)}
-2\delta\,
\dfrac{B_{x}(\alpha+1, \beta)}{B(\alpha, \beta)}
+\delta^2\,
\dfrac{B_{x}(\alpha+2, \beta)}{B(\alpha, \beta)}
\biggr]
.
\end{align*}
But, by \eqref{CDF}, the right-hand side is equal to the CDF $F(x;\boldsymbol{\theta}_\delta)$.

Then we have completed the proof.
\end{proof}

\subsection{Identifiability}\label{Identifiability}
\noindent

Let us suppose that $f(x; \alpha, \beta)$ is the PDF of the Beta distribution, where $\alpha>0$ and
$\beta>0$ are the shape parameters.
A simple observation shows that the bimodal Beta PDF $f(x;\boldsymbol{\theta}_\delta)$ in \eqref{beta-density}, with parameter vector $\boldsymbol{\theta}_\delta =(\alpha,\beta,\rho,\delta)$, can be written as a finite (generalized) mixture of three Beta distributions with different shape parameters, i.e.
\begin{align}\label{eq-density}
f(x;\boldsymbol{\theta}_\delta)
=
\pi_1 f(x;\alpha,\beta)
+
\pi_2 f(x;\alpha+1,\beta)
+
\pi_3 f(x;\alpha+2,\beta),
\quad 0\leqslant x\leqslant 1,
\end{align}
where $\pi_1$, $\pi_2$ and $\pi_3$ are constants (that depends only on $\boldsymbol{\theta}_\delta$) given in Proposition \ref{Stochastic representation},
and $Z({\boldsymbol{\theta}_\delta})$
is as in \eqref{partition-function}.
Unlike Proposition \ref{Stochastic representation}, here $\delta$ can be negative. In principle, mixing non-negative weights are not necessary since mixtures can be PDF even if some of weights are negative.

\smallskip
Let $\mathcal{B}$ be the family of Beta distributions, as follows:
\begin{align*}
\mathcal{B}=\biggl\{F:F(x;\alpha,\beta)=\int_{0}^{x}f(y;\alpha,\beta)\, {\rm d}y, \ \alpha>0, \beta>0,\  0\leqslant x\leqslant 1 \biggl\}.
\end{align*}

Write $\mathcal{H}_{\mathcal{B}}$ the class of all finite mixtures of $\mathcal{B}$. It is well-known that the class $\mathcal{H}_{\mathcal{B}}$ is identifiable (this fact is a consequence of the main result of \cite{Atienza06}).

The following result proves  the identifiability of bimodal Beta distribution.
\begin{proposition}
	The mapping $\boldsymbol{\theta}_\delta \longmapsto f(\cdot;\boldsymbol{\theta}_\delta)$ is one-to-one.
\end{proposition}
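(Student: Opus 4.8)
The plan is to reduce the injectivity of $\boldsymbol{\theta}_\delta \longmapsto f(\cdot;\boldsymbol{\theta}_\delta)$ to the known identifiability of the class $\mathcal{H}_{\mathcal{B}}$ of finite mixtures of Beta distributions. By the representation \eqref{eq-density}, each BBeta density $f(\cdot;\boldsymbol{\theta}_\delta)$ is the (generalized) mixture $\pi_1 f(\cdot;\alpha,\beta)+\pi_2 f(\cdot;\alpha+1,\beta)+\pi_3 f(\cdot;\alpha+2,\beta)$ of three Beta densities whose first shape parameters are consecutive integers $\alpha,\alpha+1,\alpha+2$ and whose second shape parameter is the common value $\beta$. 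So suppose $f(\cdot;\boldsymbol{\theta}_\delta)=f(\cdot;\widetilde{\boldsymbol{\theta}}_{\widetilde\delta})$ with $\boldsymbol{\theta}_\delta=(\alpha,\beta,\rho,\delta)$ and $\widetilde{\boldsymbol{\theta}}_{\widetilde\delta}=(\widetilde\alpha,\widetilde\beta,\widetilde\rho,\widetilde\delta)$; I want to conclude $\alpha=\widetilde\alpha$, $\beta=\widetilde\beta$, $\rho=\widetilde\rho$, $\delta=\widetilde\delta$.

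First I would handle the degenerate cases separately. If $\delta=0$ the density is the plain Beta$(\alpha,\beta)$ (with $\rho$ not identified, but then $\boldsymbol{\theta}_0\coloneqq(\alpha,\beta)$ is the genuine parameter, as the paper notes after \eqref{partition-function}); and if $\delta\neq 0$ one checks, e.g.\ from the behavior of $[\rho+(1-\delta x)^2]$, that $f(\cdot;\boldsymbol{\theta}_\delta)$ is not a single Beta density, so $\delta=0 \iff \widetilde\delta=0$ and we may assume both are nonzero. With $\delta,\widetilde\delta\neq 0$, both sides of $f(\cdot;\boldsymbol{\theta}_\delta)=f(\cdot;\widetilde{\boldsymbol{\theta}}_{\widetilde\delta})$ are elements of $\mathcal{H}_{\mathcal{B}}$ written as mixtures of (at most) three Beta components with \emph{nonzero} weights $\pi_i$, $\widetilde\pi_i$ — one must first argue the $\pi_i$ are all nonzero under the standing parameter constraints, or else absorb any vanishing weight into a shorter mixture and note the argument below still applies. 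Identifiability of $\mathcal{H}_{\mathcal{B}}$ then forces the two multisets of (component, weight) pairs to coincide: the Beta parameter pairs $\{(\alpha,\beta),(\alpha+1,\beta),(\alpha+2,\beta)\}$ equal $\{(\widetilde\alpha,\widetilde\beta),(\widetilde\alpha+1,\widetilde\beta),(\widetilde\alpha+2,\widetilde\beta)\}$ with matching weights.

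From the equality of these multisets the recovery of parameters is routine. The second coordinates give $\beta=\widetilde\beta$ immediately. The first coordinates are two arithmetic progressions of length three with common difference $1$ and the same elements, so they have the same minimum: $\alpha=\widetilde\alpha$. Matching the weight attached to the component $(\alpha,\beta)$ gives $\pi_1=\widetilde\pi_1$, i.e.\ $(1+\rho)/Z(\boldsymbol{\theta}_\delta)=(1+\widetilde\rho)/Z(\widetilde{\boldsymbol{\theta}}_{\widetilde\delta})$; matching the weight of $(\alpha+1,\beta)$ gives $\pi_2=\widetilde\pi_2$, i.e.\ $-2\alpha\delta/[Z(\boldsymbol{\theta}_\delta)(\alpha+\beta)]=-2\alpha\widetilde\delta/[Z(\widetilde{\boldsymbol{\theta}}_{\widetilde\delta})(\alpha+\beta)]$; and the weight of $(\alpha+2,\beta)$ gives $\pi_3=\widetilde\pi_3$. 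From $\pi_2=\widetilde\pi_2$ and $\pi_3=\widetilde\pi_3$ one gets $\delta/Z(\boldsymbol{\theta}_\delta)=\widetilde\delta/Z(\widetilde{\boldsymbol{\theta}}_{\widetilde\delta})$ and $\delta^2/Z(\boldsymbol{\theta}_\delta)=\widetilde\delta^{\,2}/Z(\widetilde{\boldsymbol{\theta}}_{\widetilde\delta})$, whence $\delta=\widetilde\delta$ and therefore $Z(\boldsymbol{\theta}_\delta)=Z(\widetilde{\boldsymbol{\theta}}_{\widetilde\delta})$; then $\pi_1=\widetilde\pi_1$ yields $\rho=\widetilde\rho$. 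This proves the map is one-to-one.

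I expect the main obstacle to be purely bookkeeping rather than conceptual: namely, legitimately invoking the identifiability of $\mathcal{H}_{\mathcal{B}}$ when the mixing weights $\pi_i$ are allowed to be negative and possibly zero (the paper explicitly uses generalized mixtures with negative weights). One must either (i) verify that Atienza et al.'s identifiability result, or the reduction used here, remains valid for signed finite linear combinations of Beta densities — which it does, since the key tool is linear independence of distinct Beta densities — or (ii) carefully check that under the running constraints ($\alpha,\beta>0$, $\rho\ge 0$, $\delta\neq0$) none of $\pi_1,\pi_2,\pi_3$ vanishes, so that the mixture genuinely has three distinct components and standard identifiability applies verbatim. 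Either route is short, but it is the one place where a naive appeal to "identifiability of Beta mixtures" would be incomplete.
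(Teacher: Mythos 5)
Your proposal follows essentially the same route as the paper's own proof: write the BBeta density as the three-component (generalized) Beta mixture \eqref{eq-density}, invoke the identifiability of $\mathcal{H}_{\mathcal{B}}$ to match components and weights, and then recover $\rho$ and $\delta$ from the equalities $\pi_i=\pi_i'$. The only difference is that you make explicit the caveats (the $\delta=0$ case and the possibly zero or negative weights, where the appeal to Atienza et al.\ really rests on linear independence of distinct Beta densities) that the paper's two-line proof passes over silently.
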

\begin{proof}
	Let us suposse that $f(x;\boldsymbol{\theta}_\delta)=f(x;\boldsymbol{\theta}'_\delta)$ for all $0\leqslant x\leqslant 1$. In other words, by \eqref{eq-density},
	\begin{multline*}
	\pi_1 f(x;\alpha,\beta)+
	\pi_2 f(x;\alpha+1,\beta)+
	\pi_3 f(x;\alpha+2,\beta)
	\\[0,2cm]
	=
	\pi_1' f(x;\alpha',\beta')+
	\pi_2' f(x;\alpha'+1,\beta')+
	\pi_3' f(x;\alpha'+2,\beta').
	\end{multline*}
	Since $\mathcal{H}_{\mathcal{B}}$ is identifiable, we have $\pi_i=\pi_i'$, for $i=1,2,3$, and $\alpha=\alpha'$, $\beta=\beta'$. Hence, from equalities $\pi_i=\pi_i'$, $i=1,2,3$, immediately follows that $\rho=\rho'$ and $\delta=\delta'$. Therefore, $\boldsymbol{\theta}_\delta=\boldsymbol{\theta}'_\delta$, and the proof follows.
\end{proof}

\vspace*{-0,5cm}
\section{Regression model, estimation and diagnostic analysis}
\label{sect:5}
\noindent

Let $X_1, \ldots, X_n$ be $n$ independent random variables, where each $X_i$, $i = 1, \ldots, n$, follows the PDF given in~\eqref{beta-density}.
We assume that the parameters $\alpha_i$ and $\beta_i$ satisfy the following functional relations:
\begin{equation}\label{cs1}
g_1(\alpha_i) = \eta_{1i} = \mathbf{w}^\top_i\bm{\gamma} \quad \textrm{and} \quad g_2(\beta_i) = \eta_{2i} = \mathbf{z}^\top_i\bm{\zeta},
\end{equation}
where $\bm{\gamma} = (\gamma_1, \ldots, \gamma_p)^\top$ and $\bm{\zeta} = (\zeta_1, \ldots, \zeta_q)^\top$
are vectors of unknown regression coefficients which are assumed to be functionally independent,
$\bm{\gamma} \in \mathbb{R}^p$ and $\bm{\zeta} \in \mathbb{R}^q$, with $p + q < n$,
$\eta_{1i}$ and $\eta_{2i}$ are the linear predictors, and $\mathbf{w}_i = (w_{i1}, \ldots, w_{ip})^\top$
 and $\mathbf{z}_i = (z_{i1}, \ldots, z_{iq})^\top$ are observations on $p$ and $q$ known regressors, for $i = 1, \ldots, n$. Furthermore, we assume that the covariate matrices $\mathbf{W} = (\mathbf{w}_1, \ldots, \mathbf{w}_n)^\top$ and $\mathbf{Z} = (\mathbf{z}_1, \ldots, \mathbf{z}_n)^\top$ have rank $p$ and $q$, respectively. The link functions $g_1: \mathbb{R} \rightarrow \mathbb{R}^+$ and $g_2: \mathbb{R} \rightarrow \mathbb{R}^+$ in (\ref{cs1}) must be strictly monotone, positive and at least twice differentiable,
such that $\alpha_i = g_1^{-1}(\mathbf{x}_i^\top\,\bm{\gamma})$ and $\beta_i = g_2^{-1}(\mathbf{z}_i^\top\,\bm{\zeta})$, with $g_1^{-1}(\cdot)$ and
$g_2^{-1}(\cdot)$ being the inverse functions of $g_1(\cdot)$ and $g_2(\cdot)$, respectively.

The log-likelihood function for $\boldsymbol{\theta}_\delta = (\bm{\gamma} , \bm{\zeta}, \rho, \delta)$ based on a sample of $ n $ independent
observations is given by
\begin{equation}\label{eq:logm}
\ell(\boldsymbol{\theta}_\delta) = \sum_{i=1}^{n}\ell(\alpha_i, \beta_i, \rho, \delta),
\end{equation}
where
\begin{align*}
\ell(\alpha_i, \beta_i, \rho, \delta)
&=
-\log Z(\boldsymbol{\theta}_\delta) -\log {B}(\alpha_i, \beta_i)
+
\log\big[\rho+(1-\delta{x_i})^2\big]\\
&+
(\alpha_i-1)\log x_i  +(\beta_i-1)\log(1-x_i),
\quad i=1,\ldots,n,
\end{align*}
and $Z(\boldsymbol{\theta}_\delta)$ is as in \eqref{partition-function}.

The maximum likelihood estimator (MLE)
$\widehat{\bm{\theta}}_\delta = (\widehat{\bm{\gamma}}^\top, \widehat{\bm{\zeta}}^\top, \widehat{\rho}, \widehat{\delta})^\top$  of
$\bm{\theta}_\delta  = (\bm{\gamma}^\top, \bm{\zeta}^\top, \rho, \delta)^\top$
is obtained by the maximization of the log-likelihood function \eqref{eq:logm}.
However, it is not possible to derive analytical solution for the MLE $\widehat{\bm{\theta}}$, hence
we must be required to numerical solution using some optimization algorithm such as Newton-Raphson and quasi-Newton.

Under mild regularity conditions and when $ n $ is large,
the asymptotic distribution of the MLE
$\widehat{\bm{\theta}}_\delta = (\widehat{\bm{\gamma}}^\top, \widehat{\bm{\zeta}}^\top, \widehat{\rho}, \widehat{\delta})^\top$
is approximately multivariate normal (of dimension $p+q+2$) with mean vector
$\bm{\theta}_\delta  = (\bm{\gamma}^\top, \bm{\zeta}^\top, \rho, \delta)^\top$  and variance covariance matrix
$\mathbf{K}^{-1}(\bm{\theta}_\delta )$ where
$$\mathbf{K}(\bm{\theta}_\delta )= \mathbb{E}\left[- \ {\partial \ell \left(\bm{\theta}_\delta \right)\over \partial \bm{\theta}_\delta \; \partial \bm{\theta}_\delta^\top} \right],$$
is the  expected Fisher information matrix.
Unfortunately, there is no closed form expression for the matrix $\mathbf{K}(\bm{\theta}_\delta )$.
Nevertheless, a consistent estimator of the expected Fisher information matrix is given by
$$\mathbf{J}(\widehat{\bm{\theta}}_\delta)=- \ {\partial \ell \left(\bm{\theta}_\delta \right)\over \partial \bm{\theta}_\delta \;\partial \bm{\theta}_\delta^\top} \Big{|}_{\bm{\theta}_\delta = \widehat{\bm{\theta}}_\delta} \ ,$$
which is the estimated observed Fisher information matrix.
Therefore, for large $n$, we can replace $\mathbf{K}(\bm{\theta}_\delta )$ by $\mathbf{J}(\widehat{\bm{\theta}}_\delta )$.

Let $ \theta_{\delta_r}$ be the \emph{r}-th component of $\bm{\theta}_\delta .$
The asymptotic $ 100 (1 - \varphi)\% $ confidence interval for $ \theta_{\delta_r} $ is given by
\begin{equation*}
\widehat{\theta}_r \pm z_{\varphi/2}\; \textrm{se}\left(\widehat{\theta}_{\delta_r}\right), \qquad          r = 1, \ldots, p+q+2,
\end{equation*}
where $ z_{\varphi/2}$ is the $ \varphi/2$ upper quantile of the standard normal distribution and
$ \textrm{se}\left(\widehat{\theta}_{\delta_r}\right) $ is the asymptotic standard error of $ \widehat{\theta}_{\delta_r}.$
Note that $ \textrm{se}\left(\widehat{\theta}_{\delta_r}\right) $ is  the square root of the \emph{r}-th diagonal element of the matrix $\mathbf{J}^{-1}(\widehat{\bm{\theta}}_\delta)$.

Residuals are widely used to check the adequacy of the fitted model.
To check the goodness of fit of the BBeta model,
we propose to use the randomized quantile residuals introduced by \cite{du96}.
Let $F(x_i;\boldsymbol{\theta}_\delta)$ be the cumulative distribution
function of the BBeta distribution, as defined in (\ref{CDF}), in which the regression
structures are assumed as in (\ref{cs1}). The randomized quantile residual is given by
$$r_i = \Phi^{-1}\left(F(x_i;\boldsymbol{\widehat\theta}_\delta)\right), \quad i = 1, \ldots, n,$$
where $\Phi^{-1}(\cdot)$ is the standard normal distribution function.
If the assumed model for the data is well adjusted, these residuals have standard
normal distribution \citep{du96}.

\vspace*{-0,5cm}
\section{Simulation study}
\label{sect:6}
\noindent

In this section, Monte Carlo simulations are performed
(i)  to evaluate the finite-sample behavior of the maximum likelihood estimates of the regression coefficients and
(ii) to investigate the empirical distribution of the randomized quantile residuals.

The Monte Carlo experiments were carried out by considering the following regression structure
\begin{align*}
\log\left(\alpha_i\right) &= \gamma_0 + \gamma_1\,z_i,
\nonumber
\\ \nonumber
\log\left(\beta_i\right) &= \zeta_0 + \zeta_1\,z_i, \quad i = 1, \ldots, n, \nonumber
\end{align*}
where the true values of the parameters were chosen to be same with the values of the estimated parameters for the case in which we use the application part of regression, i.e.,
$\gamma_0 = -1.8, \gamma_1 = 5.9, \zeta_0 = 3.8,
\zeta_1 = -2.4, \rho = 0.1$ and $\delta = 2.4$.
The covariate values of $ z_i $ were generated from the standard uniform
distribution.
The sample size considered was $n = 50, 100, 200$ and $300$.
All simulations were conducted in \textsf{R} using the
BFGS algorithm available in the \texttt{optim} function.
For each scenario the Monte Carlo experiment was repeated $5,000$ times.

\subsection{Parameter estimation}
\noindent

In this subsection, a small simulation study is presented to observe the
finite sample performance of the proposed estimators from regression approach.
For such evaluation, the estimated relative bias and the estimated mean squared error (RMSE) were calculated.
The results are presented in Table \ref{tab:parms_sim} and Figure \ref{fig:bps}.

Table \ref{tab:parms_sim} presents the bias and RMSE for the maximum likelihood estimators of
$\gamma_0, \gamma_1, \zeta_0, \zeta_1, \rho$ and $\delta$.
Based on these tables, we find that the estimates are convergent to
their values. As expected, increasing the sample size reduces substantially both
bias and RMSE. The previous findings are confirmed by the box plots shown in Figure \ref{fig:bps}.

\begin{table}[H]
\caption{Estimated bias and mean-squared error.}
\label{tab:parms_sim}
\onehalfspacing
    \scalefont{0.8}
    \centering
    \begin{tabular}{crrrrrrrrrrrr} \midrule
    \multirow{2}{*}{$n$} & \multicolumn{6}{c}{Bias}  & \multicolumn{6}{c}{RMSE} \\
    \cmidrule(lr){2-7}   \cmidrule(lr){8-13}
     & $\gamma_0$ & $\gamma_1$ & $\zeta_0$ & $\zeta_1$ & $\delta$ & $\rho$
    & $\gamma_0$ & $\gamma_1$ & $\zeta_0$ & $\zeta_1$ & $\delta$ & $\rho$ \\
    \midrule   50 & 0.212 & 0.106 & 0.132 & 0.299 & 0.177 & 1.306 & 0.234 & 0.634 & 0.417 & 0.839 & 0.488 & 0.235 \\
100 & 0.213 & 0.099 & 0.114 & 0.254 & 0.120 & 0.938 & 0.192 & 0.475 & 0.276 & 0.558 & 0.183 & 0.091 \\
200 & 0.202 & 0.093 & 0.095 & 0.215 & 0.081 & 0.543 & 0.157 & 0.390 & 0.181 & 0.381 & 0.068 & 0.006 \\
300 & 0.195 & 0.091 & 0.088 & 0.200 & 0.061 & 0.414 & 0.139 & 0.353 & 0.152 & 0.313 & 0.037 & 0.003 \\
\bottomrule \end{tabular}
\end{table}

\begin{figure}[H]
	\centering
	\includegraphics[width=0.9\linewidth]{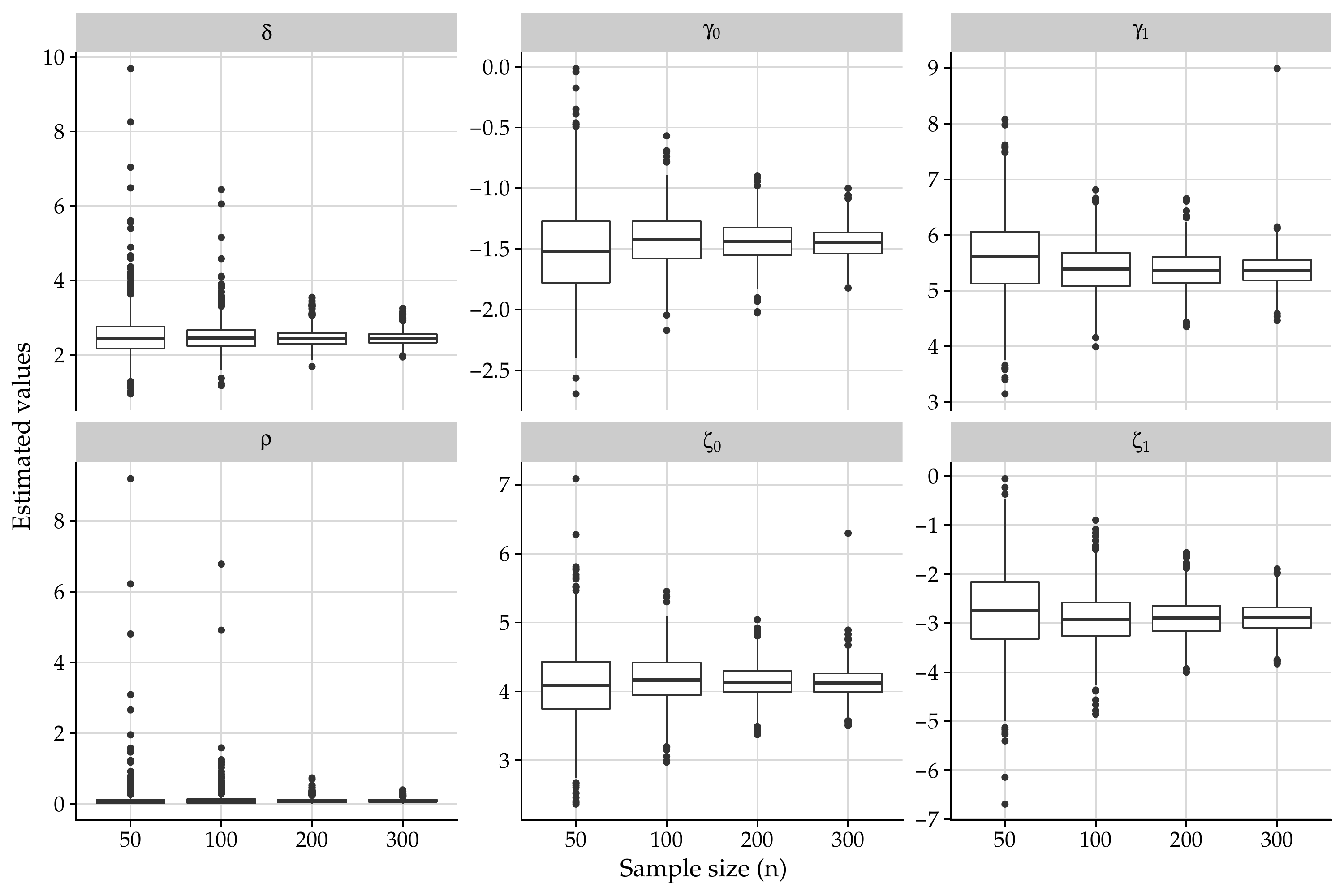}
	\caption{Boxplots of the estimated parameters obtained in Monte Carlo experiments for different sample sizes.}
	\label{fig:bps}
\end{figure}

\subsection{Residuals}
\noindent

The second simulation study was performed to examine how well the distributions of the randomized quantile residuals is approximated by the standard normal distribution. The evaluation of the randomized quantile residuals were based on the normal probability plots of the mean order statistics and descriptive measures. The results are presented in Table \ref{tab:sim_res} and Figure \ref{fig:normalplots}.

In Table \ref{tab:sim_res}, we present the mean, standard deviation (StdDev), skewness and kurtosis of the
randomized quantile residuals. For all scenarios, that is, the residuals have approximately zero mean and unit standard deviation, have skewness close to zero, and the kurtosis is near three.

Figure \ref{fig:normalplots} displays empirical quantiles versus theoretical quantiles plots of the randomized
quantile residuals. The results presented in Figure \ref{fig:normalplots} show that the
distribution of the randomized quantile residuals is approximated by the standard normal distribution.

\begin{table}[H]
\caption{Descriptive measures of the randomized quantile residuals.}
\label{tab:sim_res}
\onehalfspacing
\centering
    \begin{tabular}{crrrr} \midrule
    $n$ & Mean & StdDev & Skewness & Kurtosis \\ \midrule 50 & $-$0.001 & 0.999 & 0.028 & 2.854 \\
100 & $-$0.002 & 0.999 & 0.054 & 2.976 \\
200 & $-$0.003 & 0.997 & 0.077 & 3.002 \\
300 & $-$0.003 & 0.997 & 0.084 & 3.025 \\
\bottomrule \end{tabular}
\end{table}

\begin{figure}[H]
	\centering
	\includegraphics[scale = 0.68]{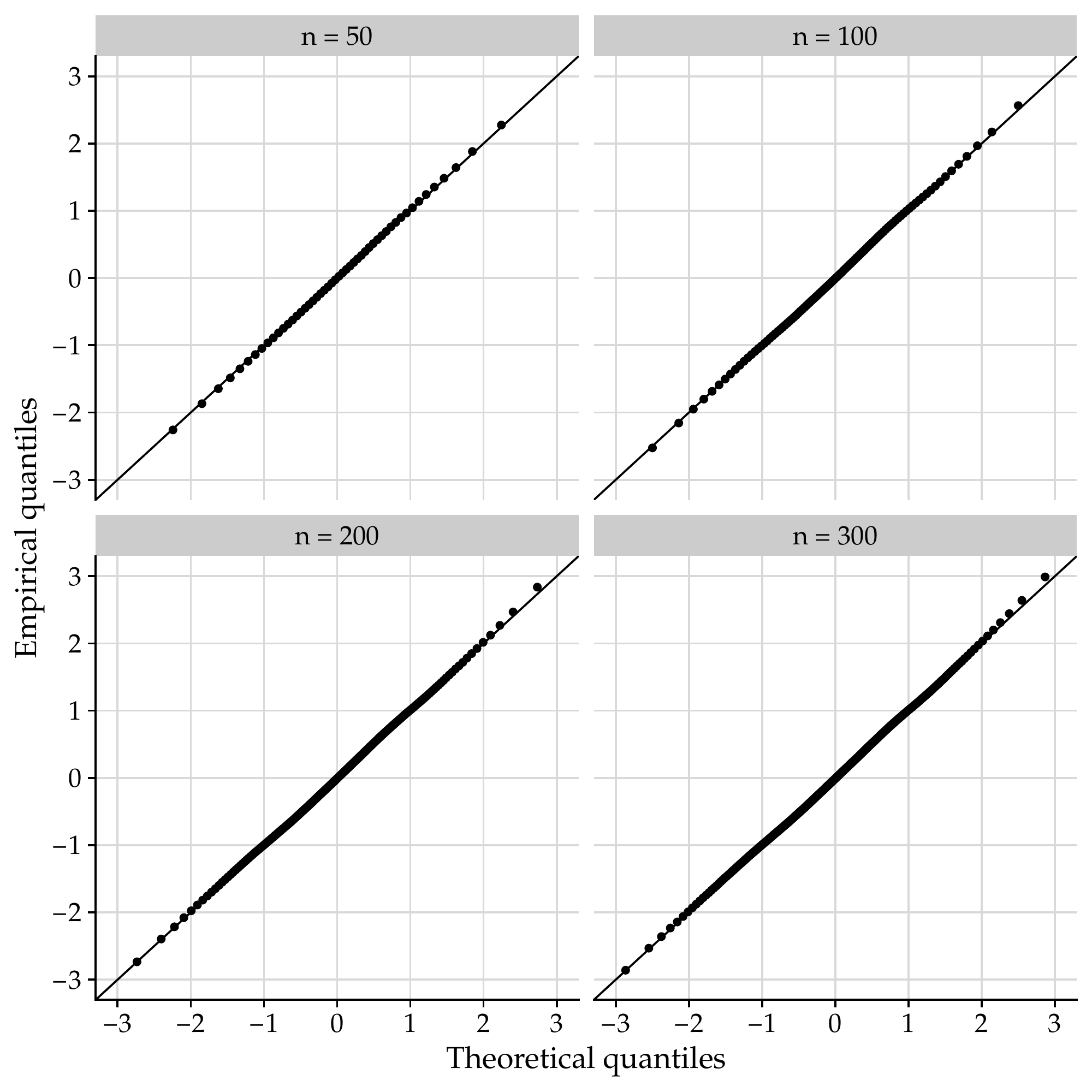}
	\caption{Normal probability plots of the mean order statistics.}
	\label{fig:normalplots}
\end{figure}

\vspace*{-0,5cm}
\section{Real data application}
\label{sect:7}
\noindent

In this section, to evaluate the applicability of the proposed model, a real data set with bimodality is considered.
In particular, a real life application related to the proportion of votes that Jair Bolsonaro received in the second turn of Brazilian elections
in 2018 is analyzed. We compared the potentiality of the BBeta regression with the traditional beta regression model.
In order to estimate the parameters of model, we adopt the MLE method (as discussed in section \ref{sect:5}).
The asymptotic standard errors and confidence intervals were computed using the observed Fisher
information matrix. The required numerical evaluations for data analysis were implemented using the R software.

The goal of this data analysis is to describe the proportion of votes that Jair Bolsonaro received in the second turn of Brazilian elections in 2018 for all 5.565 cities.
The response variable $X_i$ is the proportion of votes given the municipal human development (\textrm{mhdi}).
Figure \ref{empirical_plots:pdf} plots the histogram of response variable used
in the application and the scatterplots of municipal human development against proportion of votes.
From Figure \ref{empirical_plots:pdf}, we can see that the response variable has bimodality.
Furthermore, there is evidence of an proportion of votes trend with increased municipal human development.

\begin{figure}[H]
	\centering
	\setkeys{Gin}{width=0.43\textwidth,height=8.0cm} %
	\includegraphics[]{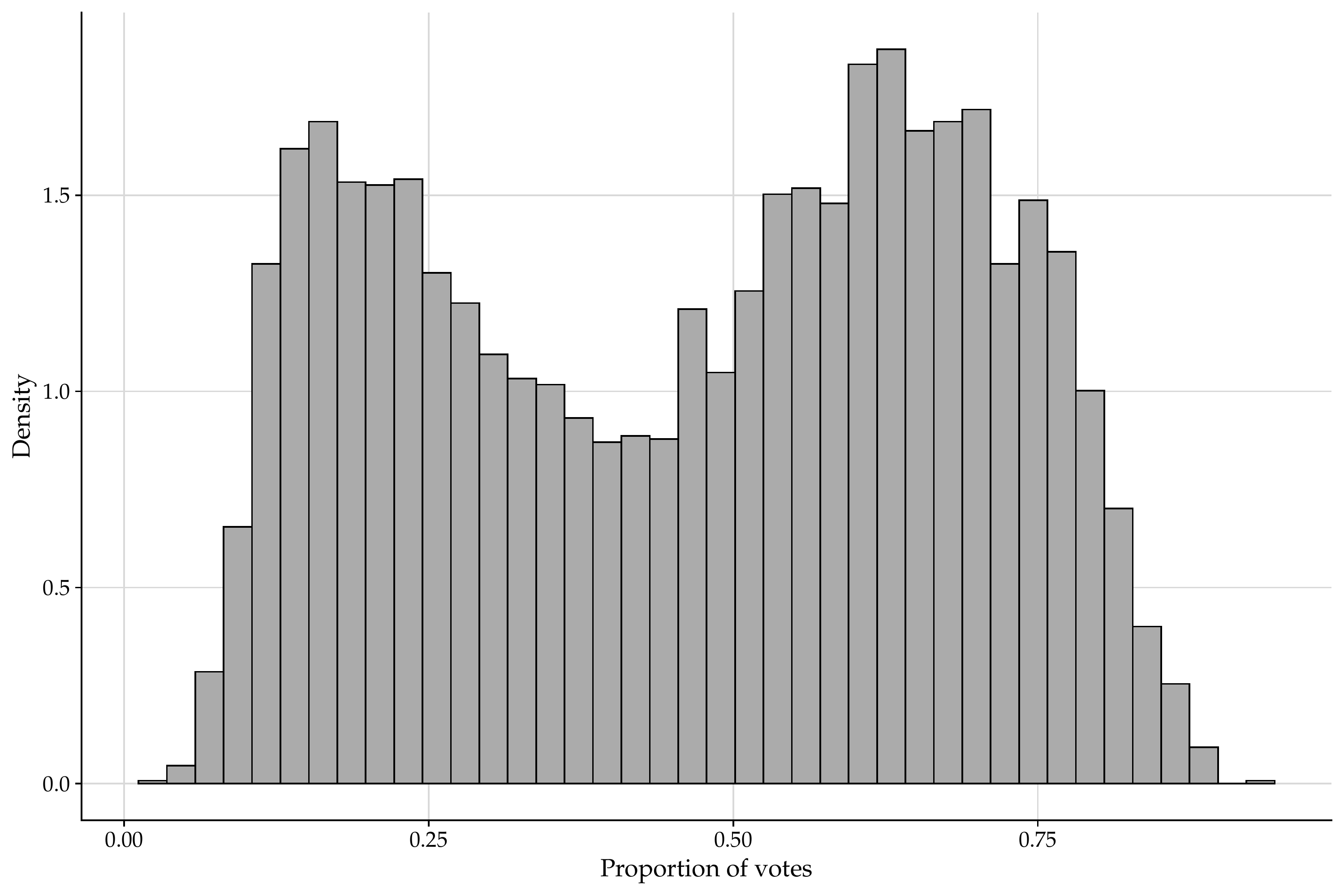}
	\includegraphics[]{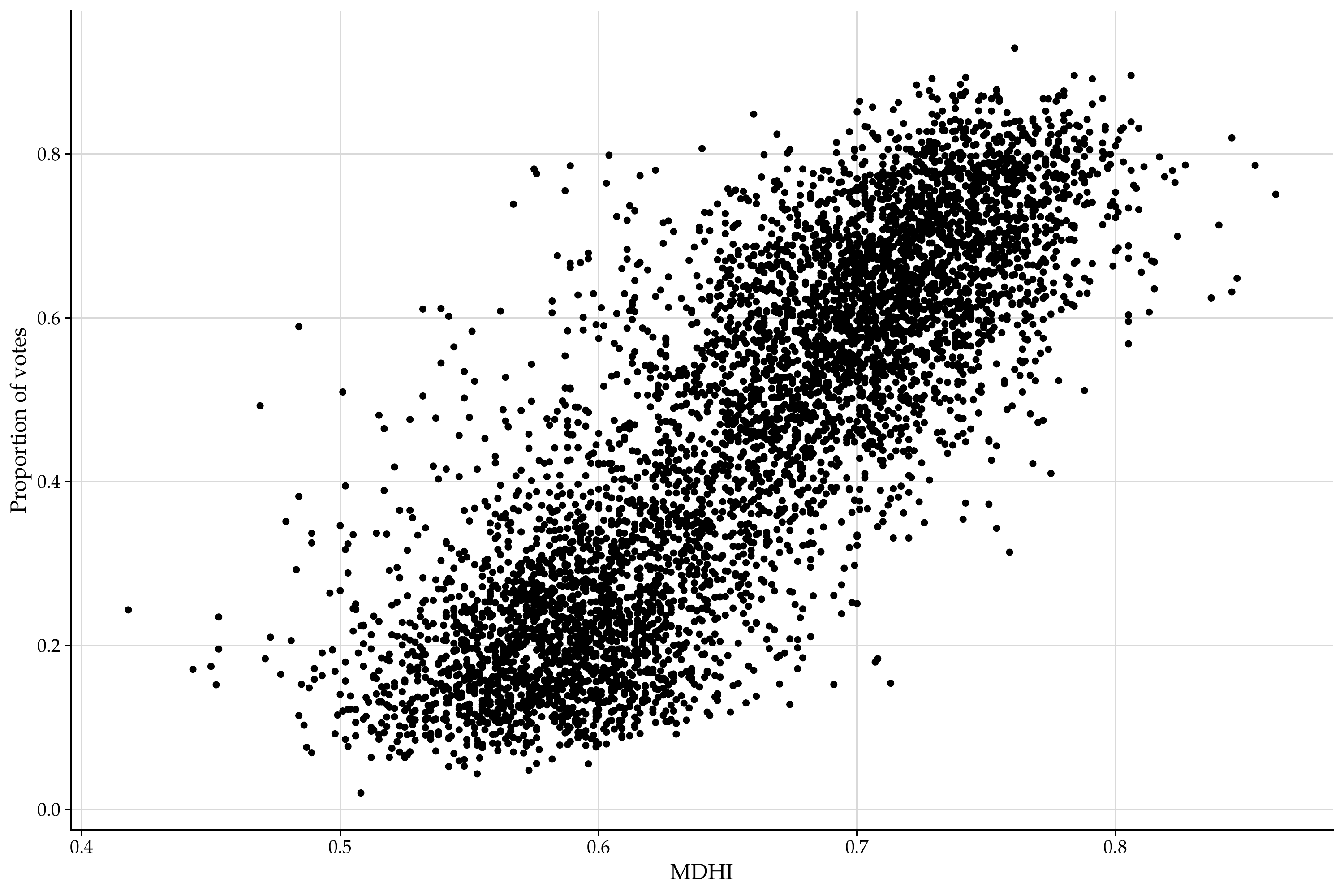}
	\caption{Empirical plots of data.}
	\label{empirical_plots:pdf}
\end{figure}

To explain this proportion of votes we consider the bimodal beta regression model, defined as
\begin{eqnarray*}
Y_i &\sim& \textrm{BBeta}(\bm{\theta}_\delta), \\
\log(\alpha_i) &=& \gamma_0 + \gamma_1\,\textrm{mhdi}_i, \\
\log(\beta_i) &=& \zeta_0 + \zeta_1\,\textrm{mhdi}_i,
\end{eqnarray*}
where $i = 1, 2, \ldots, 5.565$ cities and $\textrm{mhdi}_i$ is municipal human development of cities $i$.
For comparison purposes the beta regression model was fitted, assuming that
\begin{eqnarray*}
Y_i &\sim& \textrm{Beta}(\mu_i, \phi_i), \\
\textrm{logit}(\mu_i) &=& \beta_0 + \beta_1\,\textrm{mhdi}_i, \\
\log(\phi_i) &=& \gamma_0 + \gamma_1\,\textrm{mhdi}_i.
\end{eqnarray*}

Table \ref{estimates} shows the estimated parameters, standard errors and inferior and superior
bounds of the confidence intervals with significance level at 5\% under the BBeta and
Beta models. Note that the coefficients are statistically significant at the
the level of 5\%, for the BBeta and
Beta regression models with the structure above.

\begin{table}[H]
\centering
\caption{ML estimates, standard errors and 95\% confidence interval.}\label{estimates}
\begin{tabular}{ccrrrr}
  \hline
Model & Parameter & Estimate & S.E. & 2.5 \% & 97.5 \% \\
  \hline
\multirow{6}{*}{BBeta} & $\gamma_0$ & $-$1.8999 & 0.1963 & $-$2.2846 & $-$1.5152 \\
 & $\gamma_1$ & 5.9471 & 0.3044 & 5.3505 & 6.5437 \\
 & $\zeta_0$ & 3.8341 & 0.1915 & 3.4587 & 4.2095 \\
 & $\zeta_1$ & $-$2.4232 & 0.2862 & $-$2.9842 & $-$1.8622 \\
 & $\rho$ & 0.1096 & 0.0090 & 0.0920 & 0.1273 \\
 & $\delta$ & 2.4092 & 0.0351 & 2.3405 & 2.4780 \\
   \hdashline
\multirow{4}{*}{Beta }& $\beta_0$ & $-$7.5343 & 0.0749 & $-$7.6810 & $-$7.3875 \\
     & $\beta_1$ & 11.1820 & 0.1105 & 10.9654 & 11.3987 \\
  & $\gamma_0$ & 1.0029 & 0.1675 & 0.6746 & 1.3312 \\
  & $\gamma_1$ & 2.5214 & 0.2528 & 2.0260 & 3.0169 \\
  \hline
\end{tabular}
\end{table}

Table \ref{AIC} shows the Akaike information criterion
(AIC), Bayesian information criterion (BIC) and Kolmogorov-Smirnov (KS) statistic for the fitted models. In general, it
is expected that the better model to fit the data presents the smaller values for these quantities (AIC and BIC).
Based on the AIC and BIC criteria, the model which provides a better fit in this data set is the BBeta regression model.
These claim is also supported by the
residuals plots with simulated envelopes shown in Figure \ref{empirical_plots:pdf}.

\begin{table}[H]
    \centering
    \caption{Goodness-of-fit measures.}\label{AIC}
    \begin{tabular}{lcrr} \hline
        Model & KS & AIC & BIC \\
        \hline
        Beta  & 0.0203 (0.2014) & $-$8238 & $-$8212 \\
        BBeta & 0.0149 (0.5659) & $-$8786 & $-$8746 \\
    \hline
    \end{tabular}
\end{table}

\begin{figure}[H]
	\centering
	\includegraphics[width=0.7\linewidth]{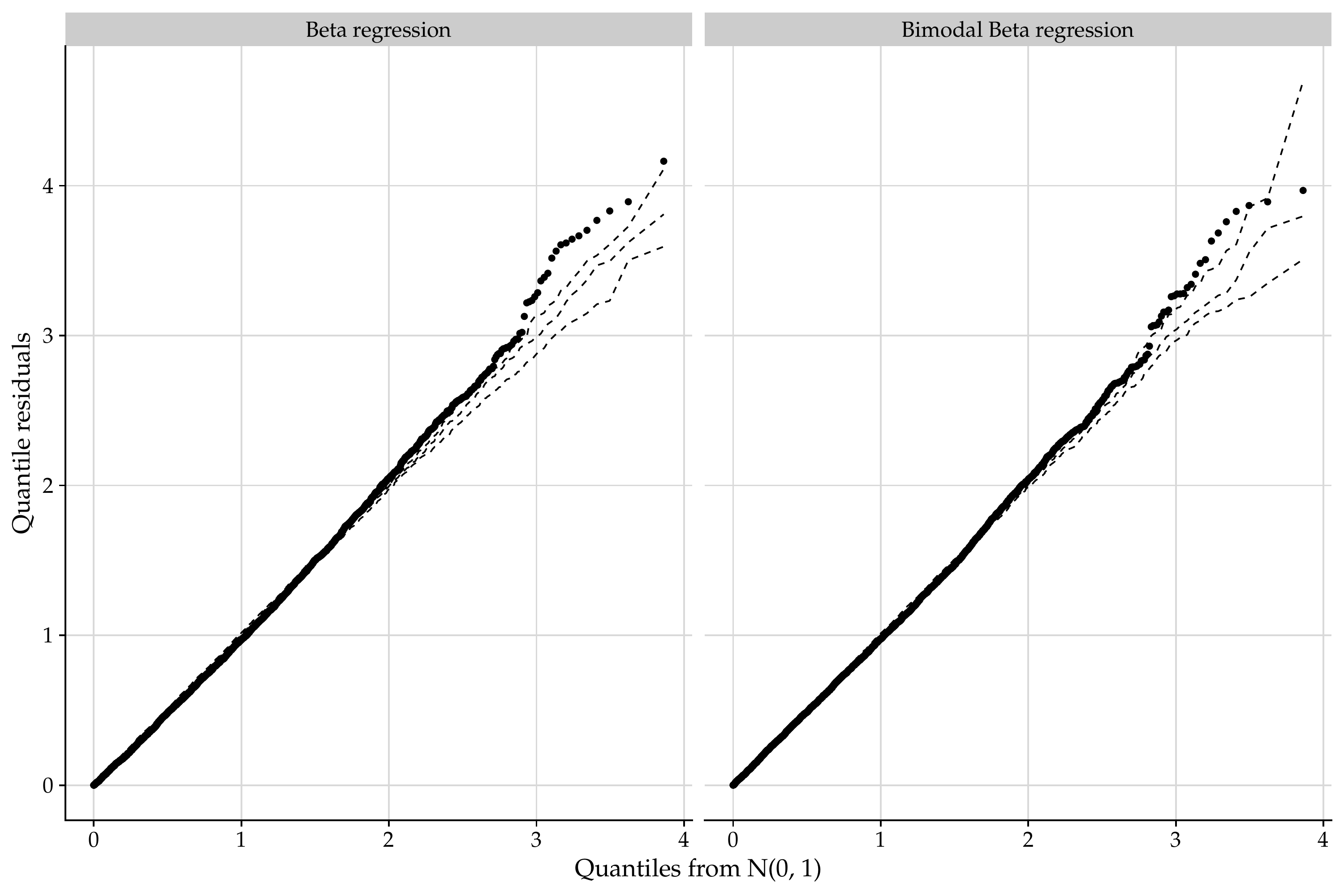}
	\caption{Half-normal plot of randomized quantile residuals with simulated envelope.}
	\label{empirical_plots:pdf}
\end{figure}

\vspace*{-0,05cm}
\section{Concluding remarks}
\label{sect:8}
\noindent

When modeling responses with bimodal bounded to the unit interval, despite its broad sense applicability in many fields, the beta distribution is not suitable.
In this paper, the well-known two-parameter beta distribution is extended by introducing two extra parameters, thus defining the bimodal beta  (BBeta) distribution, based on a quadratic transformation technique used to generate bimodal functions \citep{e:10}, which generalizes the beta distribution. We provide a mathematical treatment of the new
distribution including bimodality, moments, entropy measures, entropy measures, stochastic representation and identifiability.
We allow a regression structure for the parameters $\alpha$ and $\beta$.
The estimation of the model parameters is approached by
maximum likelihood and its good performance has been evaluated
by means of Monte Carlo simulations. Furthermore, we have proposed residuals for the proposed model and conducted a simulation study to establish
their empirical properties in order to evaluate their performances.
The proposed model was fitted to the proportion of votes that Jair Bolsonaro received in the second turn of Brazilian elections in 2018.
As expected, the BBeta model outperforms the beta regression in presence of bimodality.



\begin{thebibliography}{99}
	
\bibitem[Atienza et al.(2006)]{Atienza06} Atienza, N., Garcia-Heras, J., Muñoz-Pichardo, J. M. (2006).
A new condition for identifiability of finite mixture distributions. Metrika, 63, 215--221.



\bibitem[Bayes et al.(2012)]{bayes12}
Bayes, C.L, Baz\'an, J.L, Catalina, G. (2012). A new robust regression model
for proportions. Bayesian Analysis, 7, 841--866.






\bibitem[Dunn and Smyth(1996)]{du96}
Dunn, P.K. and Smyth, G.K., 1996. Randomized quantile residuals. Journal of Computational and
Graphical Statistics, 5, 236--244.



\bibitem[Elal-Olivero(2010)]{e:10}
Elal-Olivero, D. (2010).
Alpha-skew-normal distribution.
Proyecciones Journal of Mathematics, 29:224--240.



\bibitem[Ferrari and Cribari-Neto (2004)]{ferrari2004}
Ferrari, S. and Cribari-Neto, F. (2004).
Beta regression for modelling rates and proportions. Journal of Applied Statistics, 31, 799--815.



\bibitem[Griffiths(1947)]{griffiths1947introduction}
Griffiths, L. (1947).
Introduction to the Theory of Equations. J. Wiley.



\bibitem[Gut(2013)]{gut13}
Gut, A. (2013).
Probability: a graduate course.
(Vol. 75). Springer Science $\&$ Business Media.



\bibitem[Hahn(2021)]{Hahn21}
Hahn, E.D. (2021). Regression modelling with the tilted beta distribution: A Bayesian approach.
The Canadian Journal of Statistics, 49, 262--282.




\bibitem[Hardy et al.(1934)]{Hardy34}
Hardy, G. H., Littlewood, J. E., Pólya, G. (1934).
Inequalities.
Cambridge University Press, Cambridge.



\bibitem[Ji et al.(2005)]{ji05}
Ji, Y. Wu, C., Liu, P., Wang, J., Coombes, K.R. (2005). Applications of beta-mixture models
in bioinformatics. Bioinformatics, 9, 2118--2122.




\bibitem[Johnson et al.(1995)]{jo95}
Johnson, N.L., Kotz, S., Balakrishnan, N. (1995) Continuous Univariate Distributions., vol 2, 2nd edn. John Wiley \& Sons Inc., New York.



\bibitem[Lin et al.(2007a)]{li1}
Lin, T.I., Lee, J.C., Hsieh, W.J. (2007a). Robust mixture models using the skew-t distribution. Statistics and
Computing, 17, 81--92.



\bibitem[Lin et al.(2007b)]{li2}
Lin, T.I., Lee, J.C., Yen, S.Y. (2007b). Finite mixture modeling using the skew-normal distribution. Statistica
Sinica, 17, 909--927.



\bibitem[Ma and Leijon(2009)]{ma09}
Ma, Z., Leijon, A. (2009). Beta mixture models and the application to image classification. Proceedings of
IEEE International Conference on Image Processing (ICIP), 2045--2048.






\bibitem[Olmos et al.(2017)]{olmos17}
Olmos, N.M., Mart\'inez-Fl\'orez, G., Bolfarine, H. (2017). Bimodal Birnbaum-Saunders distribution with
applications to non-negative measurements. Communications in Statistics - Theory and Methods, 46, 6240--6257.



\bibitem[Ospina and Ferrari(2008)]{os08}
Ospina, R., Ferrari, S.L.P. (2008). Inflated beta distributions. Statistical Papers, 51, 111--126.



\bibitem[Rao(2010)]{Rao2010}
Rao, C. R. (2010).
Quadratic entropy and analysis of diversity.
Sankhya A, 72, 70--80.




\bibitem[Shannon(1948)]{Shannon1948}
Shannon, C. E. (1948).
A mathematical theory of communication,
Bell System Technical Journal, 27, 379--423, 623-656.



\bibitem[Smithson and Segale(2009)]{Smithson1}
Smithson, M., Segale, C. (2009). Partition Priming in Judgments of Imprecise Probabilities.
Journal of Statistical Theory and Practice, 3, 169--181.



\bibitem[Smithson et al.(2011)]{Smithson2}
Smithson, M., Merkle, E.C., Verkuilen J. (2011). Beta Regression Finite Mixture Models of
Polarization and Priming." Journal of Educational and Behavioral Statistics, 36, 804--831.



\bibitem[Tsallis(1988)]{Tsallis1988}
Tsallis,  C. (1988).
Possible generalization of Boltzmann-Gibbs statistics.
Journal of Statistical Physics, 52, 479--487.




\bibitem[Tsallis(2009)]{Tsallisbook09}
Tsallis, C.  (2009).
Introduction to Nonextensive Statistical Mechanics: Approaching a Complex World.
Springer, New York.




\bibitem[Vila et al.(2020)]{VFSPO20} 	
Vila, R., Ferreira, L., Saulo, H., Prataviera, F., Ortega, E. M. M. (2020).
A bimodal gamma distribution: Properties, regression model and applications.
Statistics,
54, 469--493.



\bibitem[Vila and \c{C}ankaya(2021)]{VC20} 	
Vila, R., \c{C}ankaya, M. N. (2021).
A Bimodal Weibull Distribution: Properties and Inference.
Journal of Applied Statistics, 1--19.



\bibitem[Vinberg(2003)]{vinberg2003course}
Vinberg, {\.E}. (2003).
A Course in Algebra.
Graduate studies in mathematics. American Mathematical Society.

\bibitem[Wong(2013)]{wong} Wong, M. C. (2013).
Bubble value at risk: A countercyclical risk management approach.
John Wiley \& Sons.

%
\bibitem[Xue(2012)]{xue2012loop}
Xue, J. (2012).
Loop Tiling for Parallelism.
The Springer International Series in Engineering and Computer Science. Springer US.


\end{thebibliography}
\end{document}